\renewcommand{\paragraph}{\roman{paragraph}}
\renewcommand\title[1]{\gdef\@title{\reset@font\Large\bfseries #1}}
\renewcommand\section{\@startsection {section}{1}{\z@}%
                                   {-3.5ex \@plus -1ex \@minus -.2ex}%
                                   {2.3ex \@plus.2ex}%
                                   {\normalfont\large\bfseries}}
\renewcommand\subsection{\@startsection{subsection}{2}{\z@}%
                                     {-3ex\@plus -1ex \@minus -.2ex}%
                                     {1.5ex \@plus .2ex}%
                                     {\normalfont\normalsize\bfseries}}
\renewcommand\subsubsection{\@startsection{subsubsection}{3}{\z@}%
                                     {-2.5ex\@plus -1ex \@minus -.2ex}%
                                     {1.5ex \@plus .2ex}%
                                     {\normalfont\normalsize\bfseries}}
\def\@runningauthor{}\newcommand{\runningauthor}[1]{\def\runningauthor{#1}}
\def\@runningtitle{}\newcommand{\runningtitle}[1]{\def\runningtitle{#1}}
\renewcommand{\ps@plain}{%
\renewcommand{\@evenhead}{\footnotesize\scshape \hfill\runningauthor\hfill}
\renewcommand{\@oddhead}{\footnotesize\scshape \hfill\runningtitle\hfill}}
\newcommand{\F}{\mathbb{F}}
\g@addto@macro\bfseries{\boldmath}
\theoremstyle{plain}
\newtheorem{theorem}{Theorem}[section]
\newtheorem{lem}[theorem]{Lemma}
\newtheorem{cor}[theorem]{Corollary}
\newtheorem{prop}[theorem]{Proposition}
\theoremstyle{definition}
\newtheorem{example}[theorem]{Example}
\newtheorem{conjecture}[theorem]{Conjecture}
\theoremstyle{remark}
\newtheorem{remark}[theorem]{Remark}
\runningauthor{}
\date{}
\begin{document}

  \title{Several constructions of optimal LCD codes over small finite fields\thanks{This research is supported by Natural Science Foundation of China (12071001).}}
\author{ Shitao Li\thanks{lishitao0216@163.com, School of Mathematical Sciences, Anhui University, Hefei, 230601, China.}, Minjia Shi\thanks{smjwcl.good@163.com, Key Laboratory of Intelligent Computing and Signal Processing, Ministry of
Education, School of Mathematical Sciences, Anhui University, Hefei, 230601, China; State Key Laboratory of Information Security, Institute of Information Engineering, Chinese Academy of Sciences, Beijing,
100093, China.}, Huizhou Liu\thanks{18756027866@163.com, State Grid Anhui Electric Power Co., Ltd., Hefei, China.}}

\date{}
    \maketitle

\begin{abstract}
Linear complementary dual (LCD) codes are linear codes which intersect their dual codes trivially, which have been of interest and extensively studied due to their practical applications in computational complexity and information protection. In this paper, we give some methods for constructing LCD codes over small finite fields by modifying some typical methods for constructing linear codes. We show that all odd-like binary LCD codes, ternary LCD codes and quaternary Hermitian LCD codes can be constructed using the modified methods. Our results improve the known lower bounds on the largest minimum distances of LCD codes. Furthermore, we give two counterexamples to disprove the conjecture proposed by Bouyuklieva (Des. Codes Cryptogr. 89(11): 2445-2461, 2021).
\end{abstract}
{\bf Keywords:} Binary LCD codes, ternary LCD codes, quaternary Hermitian LCD codes\\
{\bf Mathematics Subject Classification} 94B05 15B05 12E10
\section{Introduction}
LCD codes were first introduced by Massey in 1992 to solve a problem in information theory \cite{LCD-7}.
In 2004, Sendrier \cite{LCD-is-good} showed that LCD codes meet the asymptotic Gilbert-Varshamov bound by using the hull dimension spectra of linear codes.
In 2016, Carlet and Guilley \cite{lcd-appl} investigated an application of binary LCD codes against Side-Channel Attacks (SCA) and Fault Injection Attack (FIA), and gave several constructions of LCD codes. They also have been employed to construct entanglement-assisted quantum error correction codes (see \cite{H-LCD-20,H-20}).
Recently, LCD codes were extensively studied \cite{LCD-new,LCD-2,LCD-3,
LCD-5,ACD-E-codes,LCD-T-matric}.
In particular, an interesting result is that Carlet {\it{et al.}} \cite{LCD-3} showed that any code over $\F_q$ is equivalent to some Euclidean LCD code for $q\geq 4$ and any code over $\F_{q^2}$ is equivalent to some Hermitian LCD code for $q\geq 3$. This also motivates us to study LCD codes, especially LCD codes over small finite fields.
Codes over $\F_2$, $\F_3$ and $\F_4$ are called binary codes, ternary codes and quaternary codes, respectively.

It is also a fundamental topic to determine the largest minimum distance of LCD codes for various lengths and dimensions in coding
theory.
In recent years, much work has been done concerning this fundamental topic.
The largest minimum distance among all binary LCD $[n,k]$ codes was partially determined
in \cite{bound-12,HS-BLCD-1-16,2-LCD-30,AH-BLCD-17-24,B-LCD-40,2-LCD-40} for $n\leq 40$. It is worth noting that there are many unknowns in \cite{B-LCD-40} and \cite{2-LCD-40}. The largest minimum distance among all ternary LCD $[n,k]$ codes was determined in \cite{AH-TLCD-1-10,ter-11-19} for $n\leq 20$. The largest minimum diatance among all quaternary Hermitian LCD $[n,k]$ codes was partially determined in \cite{H-LCD-20,H-20,H-25} for $n\leq 25$.
More constructions of LCD codes can be seen \cite{H-LCD-IT,H-lCD-k=3,AHS-BLCD,H-lcd-IJQI,Lin-sok,LCD-10,Koro,234-lcD}.
One of the interesting constructions is that Harada \cite{234-lcD} gave two methods for constructing many LCD codes from a given LCD code by modifying some known methods for constructing self-dual codes \cite{b-up-1,b-up-4,b-up-5,b-up-6}. Using these methods, many new binary LCD codes and quaternary Hermitian LCD were constructed.
Therefore, an open problem is to extend the above results and construct new LCD codes.

There are many typical methods for constructing new linear codes from old ones. For example, the puncturing and shortening techniques, subcodes construction.
In this paper, we give some methods for constructing LCD codes by modifying the above typical methods. We show that all odd-like binary LCD codes, ternary LCD codes and quaternary Hermitian LCD codes can be constructed by our construction methods. That is to say, our methods are efficient for constructing LCD codes.
Using theses methods,
we obtain that some binary LCD codes with better parameters comparing with \cite{B-LCD-40} and \cite{2-LCD-40}. In addition, we also obtain some binary LCD codes, which are not equivalent to the codes in \cite{2-LCD-40}.
We extend the tables on ternary LCD codes to lengths up to 25. Some ternary LCD codes with better parameters are constructed comparing with \cite{234-lcD}. Finally, we also construct some quaternary Hermitian LCD codes with new parameters comparing with \cite{H-25}. These codes improve the previously known lower bounds on the largest minimum weights.
It is worth mentioning that we give two counterexamples to disprove the conjecture proposed by Bouyuklieva \cite{B-LCD-40}.

The paper is organized as follows. In Section 2, we give some notations and definitions, which can be found in \cite{Huffman}. In Section 3, we give a method for constructing LCD $[n-\ell,k-\ell,\geq d]$ and $[n-\ell,k,d-\ell]$ codes from a given $[n,k,d]$ code with $\ell$-dimensional hull. We also give a general method for constructing LCD $[n+1,k]$ and $[n,k+1]$ codes from a given LCD $[n,k]$ code. In Sections 4-6, we construct binary LCD codes, ternary LCD codes, quaternary Hermitian LCD codes and the related entanglement-assisted quantum error correction codes.
In Section 7, we conclude the paper.
All computations in this paper have been done with the computer algebra
system MAGMA \cite{magma}.

\section{Preliminaries}

Let $\F_q$ denote the finite field with $q$ elements, where $q$ is a prime power.
For any $\textbf x\in \F_q^N$, the {\em Hamming weight} ${\rm wt}({\bf x})$ of $\textbf x$ is the
number of nonzero components of $\textbf x$.
An $[N,K,D]$ linear code $C$ over $\F_q$ is a $K$-dimensional subspace of $\F_q^N$, where $D$ is the minimum nonzero Hamming weight of $C$. Let $A_i(C)$ denote the number of codewords with Hamming weight $i$ in $C$, where $0\leq i\leq N$. Then the sequence $(A_0(C),A_1(C),\ldots,A_N(C))$ is called the {\em weight distribution} of $C$.
A {\em generator matrix} for an $[N,K]$ code $C$ is any $K\times N$ matrix $G$ whose rows form a basis for $C$. For any set of $k$ independent columns of a generator matrix $G$, the corresponding set of coordinates forms an {\em information set} for $C$.
A matrix is {\em monomial} if it contains exactly one nonzero element per row and per column.
Two codes $C$ and $D$ are {\em equivalent} if there is a monomial matrix $M$ such
that $MC = D$.
The {\em Euclidean dual} code $C^{\perp_E}$ of a linear code $C$ over $\F_q$ is defined as
$$C^{\perp_E}=\{\textbf y\in \F_q^N~|~\langle \textbf x, \textbf y\rangle_E=0, {\rm for\ all}\ \textbf x\in C \},$$
where $\langle \textbf x, \textbf y\rangle_E=\sum_{i=1}^N x_iy_i$ for $\textbf x = (x_1,x_2, \ldots, x_N)$ and $\textbf y = (y_1,y_2, \ldots, y_N)\in \F_q^N$.
The {\em Hermitian dual} code $C^{\perp_H}$ of a linear code $C$ over $\F_{q^2}$ is defined as
$$C^{\perp_H}=\{\textbf y\in \F_{q^2}^N~|~\langle \textbf x, \textbf y\rangle_H=0, {\rm for\ all}\ \textbf x\in C \},$$
where $\langle \textbf x, \textbf y\rangle_H=\sum_{i=1}^N x_i\overline{y_i}$ for $\textbf x = (x_1,x_2, \ldots, x_N)$ and $\textbf y = (y_1,y_2, \ldots, y_N)\in \F_{q^2}^N$. Note that $\overline{x}=x^q$ for any $x\in \F_{q^2}$.
The {\em Euclidean hull} (resp. {\em Hermitian hull}) of the linear code $C$ is defined as
$${\rm Hull_E}(C)=C\cap C^{\perp_E}\ ({\rm resp.\ Hull_H}(C)=C\cap C^{\perp_H}).$$

A linear code $C$ over $\F_q$ is called (Euclidean) LCD if $C\cap C^{\perp_E}=\{\textbf 0\}$.
A linear code $C$ over $\F_{q^2}$ is called Hermitian LCD if $C\cap C^{\perp_H}=\{\textbf 0\}$.
The following lemma is from \cite{LCD-6,LCD-7}.
\begin{lem}\label{lem-LCD}
{\rm(1)} Let $C$ be a code with the generator matrix $G$ over $\F_q$. Then $C$ is LCD
if and only if $GG^T$ is nonsingular, where $G^T$ denotes the transpose of $G$.\\
{\rm(2)} Let $C$ be a code with the generator matrix $G$ over $\F_{q^2}$. Then $C$ is Hermitian LCD if and only if $G\overline{G}^T$ is nonsingular, where $\overline{G}^T$ denotes the conjugate transpose of $G$.
\end{lem}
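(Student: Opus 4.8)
The plan is to prove both parts simultaneously by the same linear-algebra argument, since the Hermitian case differs only in replacing the transpose $G^T$ by the conjugate transpose $\overline{G}^T$ and the Euclidean form by the Hermitian form. I will prove part (1) in detail and indicate the one change needed for part (2). The strategy is to reformulate the LCD condition $C\cap C^{\perp_E}=\{\mathbf 0\}$ entirely in terms of the rank (equivalently, the nonsingularity) of the $k\times k$ Gram matrix $GG^T$, where $G$ is a $k\times n$ generator matrix of full rank $k$.

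First I would observe that a codeword $\mathbf c\in C$ can be written uniquely as $\mathbf c=\mathbf x G$ for some row vector $\mathbf x\in\F_q^k$, because the rows of $G$ are linearly independent. Such a codeword lies in $C^{\perp_E}$ precisely when it is orthogonal to every row of $G$, i.e.\ when $G\mathbf c^T=\mathbf 0$, which rewrites as $G(\mathbf x G)^T=GG^T\mathbf x^T=\mathbf 0$. Thus $\mathbf c=\mathbf xG$ lies in $\mathrm{Hull}_E(C)=C\cap C^{\perp_E}$ if and only if $\mathbf x^T$ is in the right kernel of $GG^T$. Since $\mathbf x\mapsto \mathbf xG$ is an injective linear map (full row rank of $G$), the hull is trivial if and only if this kernel is trivial, i.e.\ if and only if $GG^T$ is nonsingular. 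This establishes the equivalence in part (1).

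For part (2), the identical argument applies with the Hermitian form $\langle\mathbf x,\mathbf y\rangle_H=\sum_i x_i\overline{y_i}$ in place of the Euclidean form. A codeword $\mathbf c=\mathbf x G$ over $\F_{q^2}$ is Hermitian-orthogonal to all rows of $G$ iff $G\overline{\mathbf c}^{\,T}=\mathbf 0$; since conjugation is a field automorphism, $\overline{\mathbf c}^{\,T}=\overline{G}^T\,\overline{\mathbf x}^{\,T}$, and the condition becomes $G\overline{G}^T\overline{\mathbf x}^{\,T}=\mathbf 0$. As $\mathbf x\mapsto\overline{\mathbf x}$ is a bijection on $\F_{q^2}^k$, the Hermitian hull is trivial iff $G\overline{G}^T$ has trivial kernel, i.e.\ iff $G\overline{G}^T$ is nonsingular.

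The argument is essentially routine linear algebra; the only point requiring mild care is the bookkeeping that translates membership in the dual into the kernel condition, namely the identity $G(\mathbf xG)^T=GG^T\mathbf x^T$ (and its Hermitian analogue), together with the observation that the coordinate map $\mathbf x\mapsto\mathbf xG$ is injective so that a nontrivial hull forces a nontrivial kernel and conversely. I do not anticipate a genuine obstacle here; the main thing to get right is that nonsingularity of the square Gram matrix is exactly equivalent to its kernel being zero, which is where the ``full rank of $G$'' hypothesis is implicitly used to make the two notions of triviality (of the hull and of the kernel) match up.
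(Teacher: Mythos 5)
Your proof is correct and complete in both parts. There is nothing in the paper to compare it against: the paper states this lemma without proof, quoting it from \cite{LCD-6} and \cite{LCD-7}. Your argument (writing a hull element as $\mathbf{x}G$, translating dual membership into $GG^T\mathbf{x}^T=\mathbf{0}$, resp. $G\overline{G}^T\overline{\mathbf{x}}^{T}=\mathbf{0}$, and invoking injectivity of $\mathbf{x}\mapsto\mathbf{x}G$ via the full row rank of $G$) is precisely the standard proof of Massey's criterion given in those references, so your proposal matches the intended justification.
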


Throughout this paper, let $d^E_q (N,K)$ denote the largest minimum weight among all LCD
$[N, K]$ codes over $\F_q~(q = 2, 3)$, and let $d^H_4 (N, K)$ denote the largest minimum weight among all quaternary Hermitian LCD $[N,K]$ codes.
An LCD $[N, K]$ code over $\F_q$ is {\em optimal LCD} if it has the minimum weight $d^E_q (N,K)$. An LCD $[N,K]$ code over $\F_q$ is called {\em almost optimal LCD} if it has the minimum weight $d^E_q (N,K)-1$.

A vector $\textbf x = (x_1, x_2, \ldots , x_n)\in \F^n_2$
is {\em even-like} if $\sum_{i=1}^nx_i=0$ and is {\em odd-like} otherwise.
A binary code is said to be {\em even-like} if it
has only even-like codewords, and is said to be {\em odd-like} if it
is not even-like.

Let $C$ be an $[n, k, d]$ code over $\F_q$, and let $T$ be a set of $t$
coordinate positions in $C$. We puncture $C$ by deleting all the
coordinates in $T$ in each codeword of $C$. The resulting code
is still linear and has length $n - t$. We denote
the {\em punctured code} by $C^T$.
Consider the set
$C(T)$ of codewords which are $0$ on $T$; this set is a subcode of $C$. Puncturing $C(T)$ on $T$ gives a code over $\F_q$ of length $n -t$ called the {\em shortened code} on $T$ and denoted $C_T$.
The following lemma is also valid with respect to the Hermitian inner product.

\begin{lem}\label{lem-shorten-puncture}{\rm\cite{Huffman}}
Let $C$ be an $[n,k,d]$ code over $\F_q$. Let $T$ be a set of $t$ coordinates. Then:
\begin{itemize}
  \item [(i)] $(C^\perp)_T=(C^T)^\perp$ and $(C^\perp)^T=(C_T)^\perp$, and
  \item [(ii)] if $t<d$, then $C^T$ and $(C^\perp)_T$ have dimensions $k$ and $n- t- k$, respectively.
%  \item [(iii)] if $t=d$ and $T$ is the set of coordinates where a minimum weight codeword is nonzero, then $C^T$ and $(C^\perp)_T$ have dimensions $k - 1$ and $n -d - k + 1$, respectively.
\end{itemize}
\end{lem}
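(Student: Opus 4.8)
The plan is to establish the two displayed identities in (i) by a direct double-inclusion argument carried out coordinatewise, and then to deduce the dimension statement (ii) from (i) together with the minimum-weight hypothesis $t<d$. Since puncturing, shortening, and dualizing all commute with a permutation of coordinates, I would first reduce to the case $T=\{n-t+1,\dots,n\}$, so that every vector $\mathbf{x}\in\F_q^n$ splits as $\mathbf{x}=(\mathbf{x}',\mathbf{x}'')$ with $\mathbf{x}'\in\F_q^{n-t}$ on the complement of $T$ and $\mathbf{x}''\in\F_q^t$ on $T$.

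For the first identity $(C^\perp)_T=(C^T)^\perp$, the key observation is that if $\mathbf{y}=(\mathbf{y}',\mathbf{0})\in\F_q^n$ vanishes on $T$, then for every $\mathbf{x}=(\mathbf{x}',\mathbf{x}'')\in\F_q^n$ one has $\langle\mathbf{x},\mathbf{y}\rangle_E=\langle\mathbf{x}',\mathbf{y}'\rangle_E$. To prove $\subseteq$, I would take the shortened image $\mathbf{y}'$ of some $\mathbf{y}=(\mathbf{y}',\mathbf{0})\in C^\perp$ and check that it is orthogonal to every punctured codeword $\mathbf{x}'\in C^T$: the identity above turns $\langle\mathbf{x}',\mathbf{y}'\rangle_E$ into $\langle\mathbf{x},\mathbf{y}\rangle_E=0$. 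For $\supseteq$, given $\mathbf{y}'\in(C^T)^\perp$ I would extend it by zeros on $T$ to $\mathbf{y}=(\mathbf{y}',\mathbf{0})$ and use the same identity to verify $\mathbf{y}\in C^\perp$; since $\mathbf{y}$ is zero on $T$ it lies in $C^\perp(T)$, so its shortening $\mathbf{y}'$ lies in $(C^\perp)_T$. The second identity $(C^\perp)^T=(C_T)^\perp$ then follows at once by applying the first identity to the code $C^\perp$ in place of $C$ and invoking $(C^\perp)^\perp=C$.

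For (ii), I would analyze the puncturing map $\pi\colon C\to C^T$ sending $\mathbf{x}\mapsto\mathbf{x}'$, which is linear and surjective by definition. Its kernel consists exactly of the codewords of $C$ supported inside $T$; any such nonzero codeword would have Hamming weight at most $t<d$, contradicting the minimum distance $d$. Hence $\pi$ is injective and $\dim C^T=k$. Finally, using part (i), $(C^\perp)_T=(C^T)^\perp$ has dimension $(n-t)-k=n-t-k$ because $C^T$ is an $[n-t,k]$ code. The whole argument transfers verbatim to the Hermitian setting, replacing $\langle\cdot,\cdot\rangle_E$ by $\langle\cdot,\cdot\rangle_H$.

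The only thing requiring genuine care is the bookkeeping in the reduction to $T$ being the last $t$ coordinates and the precise correspondence between length-$n$ vectors vanishing on $T$ and their length-$(n-t)$ shortenings; once that correspondence is set up cleanly, the inner-product identity does all the work, so there is no real analytic obstacle here, only the risk of an indexing slip.
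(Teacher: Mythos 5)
Your proof is correct and complete: the double-inclusion argument via the inner-product identity for vectors vanishing on $T$, the duality trick $(C^\perp)^T=(C_T)^\perp$ obtained by applying the first identity to $C^\perp$, and the rank--nullity argument for the puncturing map under $t<d$ all hold, and they carry over to the Hermitian case as you note. The paper itself gives no proof of this lemma (it is quoted from the Huffman--Pless reference \cite{Huffman}), and your argument is essentially the standard one found there, so there is nothing further to reconcile.
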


\section{Several methods for constructing LCD codes}

\subsection{LCD codes from shortened codes and punctured codes of linear codes}
The puncturing and shortening techniques are two very important tools for constructing new codes from old ones. In this subsection, we will use these two techniques to construct
new LCD codes with interesting and new parameters from some old LCD codes.
Firstly, we prove that Lemma 22 in \cite{LCD-3} is valid with respect to the Hermitian inner product.

\begin{theorem}\label{thm-SO+LCD}
Any linear code $C$ over $\F_q$ (resp. $\F_{q^2}$) is the direct sum of a self-orthogonal code and an LCD code with respect to the Euclidean (resp. Hermitian) inner product.
\end{theorem}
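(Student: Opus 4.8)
The plan is to exhibit the decomposition explicitly, taking the self-orthogonal summand to be the hull of $C$ and the LCD summand to be an arbitrary vector-space complement of the hull inside $C$; the whole content then lies in showing that any such complement is automatically LCD. Since the Euclidean statement is already available (it is Lemma 22 of \cite{LCD-3}), I would concentrate on the Hermitian case and work with the form $\langle\cdot,\cdot\rangle_H$ throughout, the Euclidean case being identical and in fact slightly simpler, as the form is then symmetric rather than conjugate-symmetric.

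First I would record two facts about the hull $H:=\mathrm{Hull}_H(C)=C\cap C^{\perp_H}$. Since $H\subseteq C^{\perp_H}$, every $y\in H$ satisfies $\langle c,y\rangle_H=0$ for all $c\in C$; applying this with $c=x\in H\subseteq C$ shows $\langle x,y\rangle_H=0$, so $H$ is self-orthogonal. Second, I claim the restriction of $\langle\cdot,\cdot\rangle_H$ to $C$ has radical exactly $H$: a vector $v\in C$ lies in the radical iff $\langle c,v\rangle_H=0$ for all $c\in C$, that is, iff $v\in C^{\perp_H}$, that is, iff $v\in C\cap C^{\perp_H}=H$. These two observations are where the definition of the hull does its work.

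Now I would choose any complement $V$ with $C=H\oplus V$ as $\F_{q^2}$-vector spaces. The key step is to show $V$ is Hermitian LCD, equivalently $V\cap V^{\perp_H}=\{\mathbf{0}\}$. Take $v\in V\cap V^{\perp_H}$, so that $\langle v',v\rangle_H=0$ for every $v'\in V$. For $h\in H$ I would use conjugate symmetry: $h\in C^{\perp_H}$ gives $\langle c,h\rangle_H=0$ for all $c\in C$, hence $\langle h,c\rangle_H=\overline{\langle c,h\rangle_H}=0$, and in particular $\langle h,v\rangle_H=0$ since $v\in C$. Writing an arbitrary $c\in C$ as $c=h+v'$ with $h\in H$ and $v'\in V$ then yields $\langle c,v\rangle_H=\langle h,v\rangle_H+\langle v',v\rangle_H=0$, so $v\in C^{\perp_H}$. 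Combined with $v\in V\subseteq C$ this forces $v\in H$, and $H\cap V=\{\mathbf{0}\}$ gives $v=\mathbf{0}$. Hence $V$ is LCD, and $C=H\oplus V$ is the desired decomposition.

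The only delicate point, and the step I would flag as the main obstacle, is the passage between ``$h$ on the right'' and ``$h$ on the left'' of the Hermitian form, which is exactly why the Euclidean argument cannot simply be copied verbatim: one must invoke the conjugate symmetry $\langle x,y\rangle_H=\overline{\langle y,x\rangle_H}$ to move the hull vector across the form. Everything else is bookkeeping with the direct-sum decomposition. Alternatively, the conclusion can be phrased through the Gram-matrix criterion: triviality of the radical of the form restricted to $V$ is precisely the nonsingularity of $G\overline{G}^{T}$ for a generator matrix $G$ of $V$, so $V$ is Hermitian LCD by the characterization recorded earlier.
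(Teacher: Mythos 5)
Your proof is correct and takes essentially the same route as the paper's: write $C=\mathrm{Hull}_H(C)\oplus V$ for a complement $V$, then show any $v\in V\cap V^{\perp_H}$ is orthogonal to all of $C$, hence lies in the hull, hence is zero. The only differences are cosmetic — the paper fixes the complement via a basis extension and leaves the conjugate-symmetry step implicit, whereas you allow an arbitrary complement and spell that step out.
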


\begin{proof}
We only consider the Hermitian inner product.
Let $\{\alpha_1,\ldots,\alpha_{\ell},\alpha_{\ell+1},\ldots,\alpha_k\}$ be a basis of $C$ such that
$\{\alpha_1,\ldots,\alpha_{\ell}\}$ is a basis of $C_1={\rm Hull_H}(C)=C\cap C^{\perp_H}$.
Let $C_2$ be a linear code generated by $\alpha_{\ell+1},\alpha_{l+2},\ldots,\alpha_{k}$.
Then $C=C_1\oplus C_2$.
For any ${\bf c}\in C_2\cap C_2^{\perp_H}$, we have $\langle {\bf c}, \alpha_i\rangle_H=0$. This implies that ${\bf c}\in  C\cap C^{\perp_H}$. Since $C$ is LCD, ${\bf c}={\bf0}$. Therefore, $C_2$ is LCD with respect to the Hermitian inner product.
This completes the proof.
\end{proof}

The following two theorems are very important and interesting.

\begin{theorem}\label{LCD-(n-l,k-l)}
If there exists an $[n,k,d]$ linear code $C$ with $\ell$-dimensional hull. Then there exists a set of $\ell$ coordinates position $T$ such that the shortened $C_T$ of $C$ on $T$ is an $[n-\ell,k-\ell,\geq d]$ LCD code with respect to the Euclidean and Hermitian inner product.
\end{theorem}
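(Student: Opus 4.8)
The plan is to cut the code down exactly along a ``diagonalised'' copy of its hull and then read off the LCD property from the Gram-matrix criterion (a code with generator matrix $G$ is Euclidean LCD iff $GG^{T}$ is nonsingular, Hermitian LCD iff $G\overline{G}^{T}$ is nonsingular). I will carry out the Euclidean case; the Hermitian case is word-for-word the same once every transpose is replaced by a conjugate transpose, so I indicate only the changes at the end.

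First I would fix the coordinate set $T$. Let $H=\mathrm{Hull}_{E}(C)=C\cap C^{\perp_{E}}$; then $\dim H=l$ and $H$ is self-orthogonal. A generator matrix $G_{H}$ of $H$ has rank $l$, hence $l$ linearly independent columns; let $T$ be their positions. After a row reduction of $G_{H}$ and a reordering putting $T=\{1,\dots,l\}$ first, I may assume $G_{H}=[\,I_{l}\mid A\,]$ and extend it to a generator matrix $G=\bigl(\begin{smallmatrix}I_{l}&A\\ B&D\end{smallmatrix}\bigr)$ of $C$, where $B$ is $(k-l)\times l$ and $D$ is $(k-l)\times(n-l)$. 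Since $H\subseteq C^{\perp_{E}}$, every row of $G_{H}$ is orthogonal to every row of $G$, i.e.\ $GG_{H}^{T}=0$; reading off the two blocks of this identity gives $I_{l}+AA^{T}=0$ (the self-orthogonality of $H$) and $B+DA^{T}=0$.

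Next I would describe the shortened code concretely. A codeword $(x,y)G$ is zero on $T$ exactly when $x=-yB$, so $C_{T}$ is generated by $G_{T}:=D-BA$; substituting $B=-DA^{T}$ turns this into $G_{T}=D\,(I_{n-l}+A^{T}A)$. Write $M:=I_{n-l}+A^{T}A$. The key step, and the one I expect to be the crux of the whole argument, is to show that $M$ is a symmetric idempotent: from $AA^{T}=-I_{l}$ one gets $AM=0$, and then $M^{2}=I+2A^{T}A+A^{T}(AA^{T})A=I+2A^{T}A-A^{T}A=M$. This identity is what makes the Gram matrix of $C_{T}$ collapse onto a block that is already known to be invertible.

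Finally I would compare the two Gram matrices in block form. Using $AA^{T}=-I_{l}$ and $B=-DA^{T}$, a direct computation gives $GG^{T}=\bigl(\begin{smallmatrix}0&0\\ 0&DMD^{T}\end{smallmatrix}\bigr)$; since $\mathrm{rank}(GG^{T})=k-\dim H=k-l$ (the standard relation between the hull dimension and the rank of the Gram matrix), the block $DMD^{T}$ is a nonsingular $(k-l)\times(k-l)$ matrix. On the other hand, idempotency yields $G_{T}G_{T}^{T}=DM\,M^{T}D^{T}=DM^{2}D^{T}=DMD^{T}$, which is therefore nonsingular, so $C_{T}$ is LCD by the Gram-matrix criterion. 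Nonsingularity of $G_{T}G_{T}^{T}$ also forces $G_{T}$ to have full row rank $k-l$, giving $\dim C_{T}=k-l$; and as every nonzero codeword of $C_{T}$ lifts to a codeword of $C$ vanishing on $T$, deleting those $l$ zero coordinates leaves the weight unchanged, so $d(C_{T})\ge d$. For the Hermitian statement one takes $H=\mathrm{Hull}_{H}(C)$, replaces $(\,\cdot\,)^{T}$ by $\overline{(\,\cdot\,)}^{T}$ throughout, checks $A\overline{A}^{T}=-I_{l}$ and $B=-D\overline{A}^{T}$, verifies that $M=I+\overline{A}^{T}A$ is a Hermitian idempotent, and concludes exactly as above that $G_{T}\overline{G_{T}}^{T}=DM\overline{D}^{T}$ is the nonsingular lower-right block of $G\overline{G}^{T}$.
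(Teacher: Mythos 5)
Your proof is correct, and it reaches the conclusion by a genuinely different route than the paper. Both arguments pick $T$ to be an information set of the hull (you take the positions of $l$ linearly independent columns of a generator matrix of $H$; the paper takes the pivot positions of an echelon basis of the hull inside a standard-form generator matrix $(I_k\mid A)$, which automatically lie among the first $k$ coordinates). But where you verify LCD-ness, the paper argues structurally: it shows that the subcode $C(T)$ of codewords vanishing on $T$ is a vector-space complement of $\mathrm{Hull}(C)$ in $C$, hence LCD by Theorem \ref{thm-SO+LCD}, and then observes that passing from $C(T)$ to $C_T$ only deletes coordinates that are identically zero, which changes no inner products and no weights. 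You never invoke Theorem \ref{thm-SO+LCD}; instead you run a self-contained Gram-matrix computation: $GG_H^T=0$ gives $AA^T=-I_l$ and $B=-DA^T$, the shortened code has the explicit generator matrix $G_T=DM$ with $M=I+A^TA$ a symmetric idempotent, and $G_TG_T^T=DMD^T$ is exactly the lower-right block of $GG^T$, nonsingular by the rank identity $\mathrm{rank}(GG^T)=k-\dim\mathrm{Hull}(C)$. The paper's route buys brevity and a transparent structural picture (the shortened code \emph{is} an LCD complement of the hull with its zero coordinates removed); your route buys independence from the decomposition theorem, closed-form matrices for $C_T$ and its Gram matrix, and it directly establishes the paper's remark that any information set of the hull can serve as $T$. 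Two minor polish points: the rank identity you cite as ``standard'' deserves its one-line proof ($xGG^T=0$ iff $xG\in C\cap C^{\perp}$, and $x\mapsto xG$ is injective on $\mathbb{F}_q^k$), since the paper only records its $l=0$ case as the Gram-matrix criterion for LCD; and you should say explicitly that the coordinate permutation putting $T$ first is harmless because permutations commute with taking duals and with shortening.
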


\begin{proof}
Let $G$ be a generator matrix of $C$. Without loss of generality, we may assume that $$G=(I_k|A)=({\bf e}_{k,i}|{\bf a}_i)_{1\leq i\leq k},$$
where ${\bf e}_{k,i}$ and ${\bf a}_i$ are the $i$-row of $I_k$ (the identity matrix) and $A$, respectively.
Assume that $\{{\bf r}_j\}_{j=1}^{\ell}$ is a basis of ${\rm Hull}(C)$ such that the first non-zero position of ${\bf r}_j$ is the $i_j$-th position. Without loss of generality, we may assume that
$1\leq i_1< i_2< \cdots < i_{\ell}$. Then $i_{\ell}\leq k$; otherwise ${\bf r}_{i_{\ell}}=\textbf 0$, which is a contradiction.

Let $T=\{i_1,i_2,\ldots,i_{\ell}\}$ and $J=\{1,2,\ldots,k\}\setminus T=\{j_1,j_2,\ldots,j_{k-\ell}\}$ such that $j_1<j_2<\cdots<j_{k-\ell}$.
Then we know that $\{{\bf r}_j\}_{j=1}^{\ell} \cup \{({\bf e}_{k,i}|{\bf a}_i)\}_{i\in J}$ is a basis of $C$.
So the code $C(T)$ by generating
$\{({\bf e}_{k,i}|{\bf a}_i)\}_{i\in J}$ is an LCD code by Theorem \ref{thm-SO+LCD}.
In fact, the generator matrix for the shortened code $C_T$ on $T$ is
$$G_T=({\bf e}_{k-l,i}|{\bf a}_{j_i})_{1\leq i\leq k-l},$$
 where ${\bf e}_{k-\ell,i}$ is the $i$-row of $I_{k-\ell}$ and ${\bf a}_{j_i}$ is the ${j_i}$-row of $A$ for $1\leq i\leq k-\ell$.
It follows from $C(T)$ is LCD that $C_T$ is LCD. The parameters for $C_T$ are obvious.
\end{proof}

\begin{theorem}\label{puncture}
Let $C$ be an $[n,k,d]$ linear code with $\ell$-dimensional hull. If $t<d$, then there exists a set of $t$ coordinate position $T$ such that the punctured code $C^T$ of $C$ on $T$ is an LCD $[n-\ell,k,d^*\geq d-\ell]$ code with respect to the Euclidean and Hermitian inner product.
\end{theorem}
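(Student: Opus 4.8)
The plan is to exploit the duality between shortening and puncturing so that this statement reduces to the shortening result of Theorem~\ref{LCD-(n-l,k-l)} applied to the dual code. The starting observation is that a linear code and its dual share the same hull: indeed $\mathrm{Hull}(C^\perp) = C^\perp \cap (C^\perp)^\perp = C^\perp \cap C = \mathrm{Hull}(C)$, and likewise in the Hermitian case using $(C^{\perp_H})^{\perp_H}=C$. Hence $C^\perp$ is a code of dimension $n-k$ whose hull is again $l$-dimensional, and Theorem~\ref{LCD-(n-l,k-l)} applies to it.

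First I would apply Theorem~\ref{LCD-(n-l,k-l)} to $C^\perp$. This produces a set $T$ of $l$ coordinate positions such that the shortened code $(C^\perp)_T$ is LCD; note that $|T|=l$ is exactly the number of coordinates being deleted, so the ``$t$'' of the statement is $l$ and the hypothesis reads $l<d$. Next I would invoke Lemma~\ref{lem-shorten-puncture}(i), which furnishes the identity $(C^\perp)_T = (C^T)^\perp$. Thus the dual of the punctured code $C^T$ on the very same set $T$ is precisely the LCD code $(C^\perp)_T$.

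The one genuinely load-bearing fact is that a code is LCD if and only if its dual is LCD, which is immediate from the hull computation above: $\mathrm{Hull}(C^T) = \mathrm{Hull}\bigl((C^T)^\perp\bigr)$, and the right-hand side is trivial because $(C^T)^\perp = (C^\perp)_T$ is LCD. Therefore $C^T$ is LCD, simultaneously for the Euclidean and the Hermitian inner products, the Hermitian case being identical once Lemma~\ref{lem-shorten-puncture} is used in its Hermitian form (as the excerpt records, that lemma is valid for the Hermitian inner product as well).

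It then remains to read off the parameters, which is routine bookkeeping. Since $l<d$, Lemma~\ref{lem-shorten-puncture}(ii) guarantees that puncturing on the $l$ positions of $T$ leaves the dimension equal to $k$ and the length equal to $n-l$; and deleting $l$ coordinates can lower the minimum weight by at most $l$, so the minimum distance $d^*$ of $C^T$ satisfies $d^*\geq d-l$. I expect no real obstacle beyond this accounting: the only conceptual move is recognizing the shortening/puncturing duality of Lemma~\ref{lem-shorten-puncture} together with the hull-preserving passage to $C^\perp$, after which the result is essentially a corollary of Theorem~\ref{LCD-(n-l,k-l)}.
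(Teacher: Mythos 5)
Your proposal is correct and follows essentially the same route as the paper's own proof: pass to $C^\perp$ (which has the same $l$-dimensional hull), apply Theorem~\ref{LCD-(n-l,k-l)} to obtain a set $T$ with $(C^\perp)_T$ LCD, identify $(C^\perp)_T=(C^T)^\perp$ via Lemma~\ref{lem-shorten-puncture}(i), conclude $C^T$ is LCD since LCD-ness passes to duals, and read off the parameters from Lemma~\ref{lem-shorten-puncture}(ii). Your version is in fact slightly more careful than the paper's, since you make explicit both the hull identity $\mathrm{Hull}(C^\perp)=\mathrm{Hull}(C)$ and the step that a code is LCD exactly when its dual is, which the paper leaves implicit in writing $C^T=((C^T)^\perp)^\perp$.
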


\begin{proof}
The parameters for the punctured code $C^T$ of $C$ on $T$ are obvious from (ii) of Lemma \ref{lem-shorten-puncture}.
Since $C$ is an $[n,k]$ linear code with $\ell$-dimensional hull, $C^{\perp}$ is an $[n,n-k]$ linear code with $\ell$-dimensional hull.
By Theorem \ref{LCD-(n-l,k-l)}, there exists a set of $\ell$ coordinate positions $T$ such that the shortened code $(C^\perp)_T$ of $C^\perp$ on $T$ is an LCD $[n-\ell,n-k-\ell]$ code.
It follows from (i) of Lemma \ref{lem-shorten-puncture} that $(C^T)^\perp=(C^\perp)_T$ is an LCD $[n-\ell,n-k-\ell]$ code.
It turns out that $C^T=((C^T)^\perp)^\perp$ is an LCD $[n-\ell,k]$ code.
\end{proof}

\begin{remark}
By Theorem \ref{LCD-(n-l,k-l)}, $C_T$ is LCD if $T$ is an information set of ${\rm Hull}(C)$.
Compared to the randomness of \cite{2-LCD-40} and \cite{H-25}, we determine the shortened set $T$.
In addition, the corollary 25 in \cite{LCD-3} proved that there exists an $[n+\ell,k,\geq d]$ Euclidean LCD code if there exists an $[n,k,d]$ code with $\ell$-dimensional Euclidean hull. Hence, it is easy to see that Theorem \ref{LCD-(n-l,k-l)} is more effective than \cite[Corollary 25]{LCD-3}.
\end{remark}

\subsection{Construction method I}
Firstly, we recall a typical construction method of linear codes. Let $C$ be a linear $[n,k]$ code with the generator matrix $G$. Then for any ${\bf x}\in \F_q^n$, the following matrix
$$
G'=\begin{pmatrix}
1 & {\bf x} \\
 {\rm{\bf 0}} & G
\end{pmatrix}
$$
generates a linear $[n+1,k+1]$ code. In addition, any linear $[n+1,k+1]$ code over $\F_q$ can be obtained from some linear $[n,k]$ code using the above construction.

In this subsection, we give a method for constructing many LCD $[n+1,k+1]$ codes with interesting parameters from a given LCD $[n,k]$ code by modifying the above construction method of linear codes.

\begin{theorem}\label{Methods}
{\rm(1)} Let $C$ be a binary LCD $[n,k]$ code with the generator matrix $G$. Let ${\bf x}\in C^\perp$ such that ${\rm wt}({\bf x})$ is even. Then the following matrix
$$
G'=\begin{pmatrix}
1 & {\bf x} \\
 {\rm{\bf 0}} & G
\end{pmatrix}
$$
generates a binary odd-like LCD $[n+1,k+1]$ code.

{\rm(2)} Let $C$ be a ternary LCD $[n,k]$ code with the generator matrix $G$. Let ${\bf x}\in C^\perp$ such that ${\rm wt}({\bf x})\neq 2\ ({\rm mod}\ 3)$. Then the following matrix
$$
G'=\begin{pmatrix}
1 & {\bf x} \\
 {\rm{\bf 0}} & G
\end{pmatrix}
$$
generates a ternary LCD $[n+1,k+1]$ code.

{\rm(3)} Let $C$ be a quaternary Hermitian LCD $[n,k]$ code with the generator matrix $G$. Let ${\bf x}\in C^{\perp_H}$ such that ${\rm wt}({\bf x})$ is even. Then the following matrix
$$
G'=\begin{pmatrix}
1 & {\bf x} \\
 {\rm{\bf 0}} & G
\end{pmatrix}.
$$
generates a quaternary Hermitian LCD $[n+1,k+1]$ code.
\end{theorem}

\begin{proof}
We only give the proof of (1), the other cases are similar.
Let ${\bf r}_i$ be the $i$-th row of $G$ for $1\leq i\leq k$. Let ${\bf r}'_j$ be $j$-th row of $G'$ for $1\leq j\leq k+1$.
Then we have
\begin{align*}
  \langle {\bf r}'_1, {\bf r}'_1\rangle_E  =&1+\langle {\bf x}, {\bf x}\rangle_E=1, \\
  \langle {\bf r}'_1, {\bf r}'_j\rangle_E  =&\langle {\bf x}, {\bf r}_{j-1}\rangle_E=0\ {\rm for}\ 2\leq j\leq k+1,\\
  \langle {\bf r}'_j, {\bf r}'_j\rangle_E  =&\langle {\bf r}_{j-1}, {\bf r}_{j-1}\rangle_E\ {\rm for}\ 2\leq j\leq k+1,\\
  \langle {\bf r}'_j, {\bf r}'_{j'}\rangle_E  =&\langle {\bf r}_{j-1}, {\bf r}_{j'-1}\rangle_E\ {\rm for}\ 2\leq j<j'\leq k+1.
\end{align*}
Hence we obtain that
$$
G'G'^{T}=\begin{pmatrix}
1 & 0\ \cdots\ 0 \\
\begin{array}{c}
  0 \\
  \vdots \\
  0
\end{array}
 & GG^T
\end{pmatrix}.
$$
If $C$ is a binary LCD code, then it follows from Lemma \ref{lem-LCD} that $GG^{T}$ is nonsingular. This implies that
$G'G'^{T}$ is nonsingular. By Lemma \ref{lem-LCD}, $G'$ generates a binary LCD $[n+1,k+1]$ code.
This completes the proof.
\end{proof}

It is easy to see that all binary LCD codes constructed by Theorem \ref{Methods} are odd-like.
Next, we prove that all odd-like binary LCD codes can be obtained by
the construction in Theorem \ref{Methods}.

\begin{theorem}
Any binary odd-like LCD $[n,k]$ code is obtained from some binary LCD $[n-1,k-1]$ code (up to equivalence) using the construction of Theorem \ref{Methods}.
\end{theorem}

\begin{proof}
Let $C$ be a binary odd-like LCD $[n,k]$ code. According to \cite[Propsition 3]{B-LCD-40}, there is at least one coordinate position $i$ such that the shortened code $C_{\{i\}}$ of $C$ on the $i$-th coordinate is a binary LCD $[n-1,k-1]$ code. Without loss of generality, we consider that $i=1$.
Assume that $C_{\{1\}}$ has the generator matrix $G_1$. Then
there exists a binary vector ${\bf x'}=(x'_1,x'_2,\ldots,x'_{n-1})$ of length $n-1$ such that the following matrix
$$G=\begin{pmatrix}
1 & {\bf x'} \\
\textbf 0 & G_1
\end{pmatrix}$$
is a generator matrix of $C$.
Since $C_{\{1\}}$ is a binary LCD code, $\F_2^{n-1}=C_{\{1\}}\oplus {C_{\{1\}}}^\perp$.
So there are ${\bf x}=(x_1,x_2,\ldots,x_{n-1})\in {C_{\{1\}}}^\perp$ and ${\bf y}=(y_1,y_2,\ldots,y_{n-1})\in C_{\{1\}}$ such that ${\bf x'}={\bf x}+{\bf y}$.

Using the binary vector ${\bf x}=(x_1,x_2,\ldots,x_{n-1})\in {C_{\{1\}}}^\perp$ of length $n-1$ and $G_1$ we get a generator matrix $G'$ of a
linear $[n, k]$ code $C'$ by Theorem \ref{Methods}. And
$$G'=\begin{pmatrix}
1 & {\bf x} \\
\textbf 0 & G_1
\end{pmatrix}\sim
\begin{pmatrix}
1 & {\bf x}+{\bf y} \\
\textbf 0 & G_1
\end{pmatrix}=G.$$
Thus the given code $C$ is equivalent to $C'$, as desired.
We assert that $wt(x)$ is even; otherwise $(1|{\bf x})\in C\cap {C}^\perp$, which is a contradiction.
This completes the proof.
\end{proof}

Finally, we give two counterexamples to show that Conjecture 1 in \cite{B-LCD-40} is invalid.

\begin{conjecture}\label{conjecture}{\rm\cite[Conjecture 1]{B-LCD-40}}
Let $k$ be an even position integer and $n>k$ be another integer. If $d_{2}^{E}(n,k)$ is even and $d_2^{E}(n-1,k)=d_2^{E}(n,k)-1$, then all binary LCD $[n,k,d_2^{E}(n,k)]$ are even-like codes.
\end{conjecture}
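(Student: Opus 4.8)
The statement labelled Conjecture~\ref{conjecture} is not one the paper intends to prove; the surrounding text announces that it will be \emph{refuted}. Accordingly, the object to produce is a pair of explicit counterexamples: a length–dimension pair $(n,k)$ satisfying all four hypotheses of the conjecture ($k$ even, $n>k$, $d_2^{E}(n,k)$ even, and $d_2^{E}(n-1,k)=d_2^{E}(n,k)-1$) together with an \emph{odd-like} binary LCD $[n,k,d_2^{E}(n,k)]$ code. Exhibiting even one such code contradicts the conclusion that all optimal codes at those parameters are even-like. The plan is therefore a guided search: first isolate candidate pairs $(n,k)$ from the known tables of largest minimum weights, and then construct an odd-like optimal LCD code at each candidate and certify it by computer.

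For the construction step I would lean directly on Theorem~\ref{Methods}(1), which manufactures odd-like codes by design. Starting from a binary LCD $[n-1,k-1]$ code $C$ with generator matrix $G$ and a vector ${\bf x}\in C^{\perp}$ of even weight, the matrix $G'=\begin{pmatrix} 1 & {\bf x} \\ {\bf 0} & G \end{pmatrix}$ generates an odd-like binary LCD $[n,k]$ code $C'$ whose first row $(1\,|\,{\bf x})$ has odd weight. Its minimum weight is governed by two families of codewords: those of the form $(0\,|\,c)$ with $c\in C$, whose weights are at least $d(C)$, and those of the form $(1\,|\,{\bf x}+c)$, of weight $1+wt({\bf x}+c)$. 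Hence the strategy is to take $C$ with $d(C)$ as large as possible and to choose the even-weight ${\bf x}\in C^{\perp}$ so that $1+wt({\bf x}+c)$ stays at least $d_2^{E}(n,k)$ for every $c$; a randomized scan over even-weight vectors of $C^{\perp}$ quickly locates such an ${\bf x}$ when it exists. Moreover, by the preceding characterization theorem every odd-like binary LCD code already arises this way, so restricting attention to outputs of Theorem~\ref{Methods}(1) loses no generality in the search.

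The verification step has three components. Being LCD is confirmed through the nonsingularity criterion (that $C'$ is LCD iff $G'G'^{T}$ is nonsingular), which Theorem~\ref{Methods} guarantees once $wt({\bf x})$ is even and $C$ is LCD. Being odd-like is immediate from the odd-weight first row. The delicate point is \emph{optimality}: one must confirm both that the constructed $C'$ attains $d_2^{E}(n,k)$ and that the four numerical hypotheses genuinely hold at $(n,k)$. All of this is read off from the tabulated values of $d_2^{E}$ in the cited references together with a direct MAGMA computation of $d(C')$.

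The main obstacle is the search itself rather than any single argument. The four hypotheses are jointly restrictive: one needs the optimal distance to be even \emph{and} to drop by exactly one when the length is reduced, which already rules out most pairs, and even among the survivors a generic optimal code may well be even-like, so a dedicated construction is required. Thus the difficulty is to comb the known tables for $n\le 40$, flag the pairs meeting the hypotheses, and, for each, decide whether an odd-like code can reach the record distance; the honest claim at a candidate $(n,k)$ is only a lower bound $d(C')$ on $d_2^{E}(n,k)$ unless a matching upper bound or a known optimal code certifies equality. Two parameter pairs surviving this entire pipeline furnish the required counterexamples and show Conjecture~\ref{conjecture} to be invalid.
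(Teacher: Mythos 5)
Your proposal takes essentially the same route as the paper: the paper refutes the conjecture exactly by applying Theorem~\ref{Methods}(1) to known optimal LCD $[13,7,4]$ and $[15,9,4]$ codes from the literature, with explicit even-weight dual vectors ${\bf x}$, to produce odd-like optimal LCD $[14,8,4]$ and $[16,10,4]$ codes, and then checks from the known tables that $d_2^E(14,8)=4$ is even with $d_2^E(13,8)=3$ (and similarly for $(16,10)$), so the hypotheses hold while the conclusion fails. Your pipeline (construct odd-like codes via Theorem~\ref{Methods}(1), certify LCD-ness, odd-likeness, and optimality against the tables and MAGMA) is precisely what the paper does; the only difference is that the paper names the concrete parameter pairs $(14,8)$ and $(16,10)$ rather than describing the search.
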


\begin{prop}\label{anti-example}
There exist optimal binary odd-like LCD $[14,8,4]$ and $[16,10,4]$ codes.
\end{prop}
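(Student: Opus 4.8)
The plan is to prove the existence of these two optimal codes by exhibiting explicit generator matrices (found by computer search, which is consistent with the paper's reliance on MAGMA) and verifying three things for each: that the code is LCD, that it is odd-like, and that its minimum distance equals the optimal value~$4$. Since the whole point of Proposition~\ref{anti-example} is to refute Conjecture~\ref{conjecture}, I would first check that the relevant background conditions of the conjecture actually hold for the parameters $(n,k)=(14,8)$ and $(16,10)$: namely that $k$ is even, that $d_2^E(n,k)=4$ is even, and that $d_2^E(n-1,k)=d_2^E(n,k)-1=3$. Confirming $d_2^E(13,8)=3$ and $d_2^E(15,10)=3$ from the known tables is essential, because otherwise the examples would not be genuine counterexamples.

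The core of the argument is constructive. For each parameter pair I would produce a generator matrix $G=(I_k \mid A)$ of an odd-like binary code and then apply Lemma~1(1): the code is LCD if and only if $GG^T$ is nonsingular over $\F_2$. So the first verification step is a rank computation of $GG^T$ (equivalently, checking $\det(GG^T)=1$ in $\F_2$). The second step is to confirm the code is odd-like, i.e.\ that it contains at least one codeword of odd weight; it suffices to exhibit one generator row with odd Hamming weight. The third step is to certify the minimum distance: the upper bound $d\le 4$ comes from the optimality value in the tables, so I only need to show $d\ge 4$, which amounts to checking that every nonzero codeword has weight at least~$4$ (a finite enumeration over the $2^k-1$ nonzero codewords, or a weight-enumerator computation).

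A natural and economical way to obtain the matrices is to use the constructions developed earlier in this very section. Concretely, I would try to realize each code via Theorem~\ref{Methods}(1): start from a binary LCD $[n-1,k-1,\ge 4]$ code $C$ with generator matrix $G$, choose ${\bf x}\in C^\perp$ of even weight, and form
\[
G'=\begin{pmatrix}
1 & {\bf x}\\
{\bf 0} & G
\end{pmatrix},
\]
which Theorem~\ref{Methods}(1) guarantees is an odd-like binary LCD $[n,k]$ code. Thus a $[14,8,4]$ example would be lifted from a $[13,7,\ge 4]$ LCD code and a $[16,10,4]$ example from a $[15,9,\ge 4]$ LCD code, with ${\bf x}$ selected so that the minimum distance does not drop below~$4$. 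This reduces the problem to locating suitable smaller LCD codes and an appropriate even-weight dual vector.

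The main obstacle I anticipate is not the LCD or odd-like verification, both of which are short linear-algebra checks, but rather simultaneously controlling the minimum distance while forcing oddness. The construction automatically yields an odd-like code, yet the new first generator $(1\mid{\bf x})$ has weight $1+wt({\bf x})$, so to keep $d\ge 4$ the vector ${\bf x}$ must have weight at least~$3$; combined with the parity constraint $wt({\bf x})$ even, this pushes $wt({\bf x})\ge 4$, and one must still ensure every \emph{combination} of the new row with rows of $G$ stays at weight $\ge 4$. Verifying this, together with pinning down the table values $d_2^E(13,8)=d_2^E(15,10)=3$ that make the conjecture's hypotheses applicable, is where the real work lies; the remainder is a direct MAGMA-assisted computation.
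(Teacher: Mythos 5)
Your proposal matches the paper's proof: the paper likewise starts from the known optimal odd-like binary LCD $[13,7,4]$ and $[15,9,4]$ codes of \cite{HS-BLCD-1-16} and applies Theorem~\ref{Methods}(1) with an explicit even-weight vector ${\bf x}$ in the dual, then certifies the minimum distance (via the weight distribution) by computer. Your additional observations --- that oddness is automatic from the construction, that $wt({\bf x})\ge 4$ is forced, and that the table values $d_2^E(13,8)=3$, $d_2^E(15,10)=3$ must be confirmed for the counterexample claim --- are all consistent with what the paper does in the proof and the remark that follows it.
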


\begin{proof}
According to \cite[Table 3]{HS-BLCD-1-16}, there exist optimal binary odd-like LCD $[13,7,4]$ and $[15,9,4]$ codes, which have the following generator matrices respectively
{\small\begin{center}
$\begin{pmatrix}
\setlength{\arraycolsep}{1.2pt}
    \begin{array}{ccccccccccccc}
      1&0&0&0&0&0&0&1&1&0&1&0&1 \\
        0&1&0&0&0&0&0&0&1&1&1&1&0 \\
        0&0&1&0&0&0&0&0&0&1&1&0&1 \\
        0&0&0&1&0&0&0&1&0&0&1&1&0 \\
        0&0&0&0&1&0&0&0&0&1&0&1&1 \\
        0&0&0&0&0&1&0&1&1&1&0&1&1 \\
        0&0&0&0&0&0&1&1&1&1&1&0&0 \\
    \end{array}
\end{pmatrix},~
$
$\begin{pmatrix}
\setlength{\arraycolsep}{1.2pt}
   \begin{array}{ccccccccccccccc}
    1&0&0&0&0&0&0&0&0&1&1&0&0&0&1\\
0&1&0&0&0&0&0&0&0&0&1&1&1&1&0\\
0&0&1&0&0&0&0&0&0&0&1&1&1&0&1\\
0&0&0&1&0&0&0&0&0&1&1&0&0&1&0\\
0&0&0&0&1&0&0&0&0&0&1&1&0&1&1\\
0&0&0&0&0&1&0&0&0&0&1&0&1&1&1\\
0&0&0&0&0&0&1&0&0&1&1&1&1&1&1\\
0&0&0&0&0&0&0&1&0&1&1&0&1&0&0\\
0&0&0&0&0&0&0&0&1&1&1&1&0&0&0\\
   \end{array}
\end{pmatrix}.
$
\end{center}}
By applying Theorem \ref{Methods} to the LCD $[13,7,4]$ code, a binary LCD $[14,8,4]$ code $C'$ is constructed, where ${\bf x}=(1 0 0 1 1 1 0 0 0 1 1 0 0)$. The code $C'$ has weight distribution as:
$$[ \langle0, 1\rangle, \langle4, 24\rangle, \langle5, 36\rangle, \langle6, 36\rangle, \langle7, 60\rangle, \langle8, 45\rangle, \langle9, 28\rangle, \langle10, 20\rangle, \langle11,4\rangle, \langle12, 2\rangle].$$
Similarly, a binary LCD $[16,10,4]$ code is constructed, where ${\bf x}=(1 1 1 1 1 1 0 1 1 0 0 1 1 1 1)$, which has weight distribution as:
\begin{center}
$[ \langle0, 1\rangle, \langle4, 43\rangle, \langle5, 81\rangle, \langle6, 96\rangle, \langle7, 189\rangle, \langle8, 207\rangle, \langle9, 162\rangle,$\\ $\langle10, 144\rangle,
\langle11, 66\rangle, \langle12, 21\rangle, \langle13, 13\rangle, \langle15, 1\rangle ].$
\end{center}
In the sense of equivalence, these two LCD code are also constructed by \cite{HS-BLCD-1-16}. Here we just show that we can also get such codes by our method.
\end{proof}

\begin{remark}
Let $n=14$ and $k=8$.
By \cite[Table 3]{HS-BLCD-1-16}, $d_{2}^E(14,8)=4$ and $d_{2}^E(13,8)=3$. The condition of Conjecture \ref{conjecture} is satisfied. However, the binary LCD $[14,8,4]$ code in Proposition \ref{anti-example} is odd-like. Therefore, Conjecture {\rm\ref{conjecture}} is invalid.
Similarly, the odd-like binary LCD $[16,10,4]$ is also a counterexample of Conjecture \ref{conjecture}.
\end{remark}

\subsection{Construction method II}
In this subsection, we recall another typical construction methods of linear codes. Let $C$ be a linear $[n,k]$ code with the generator matrix $G$ over $\F_q$. For any ${\bf x}\in \F_q^n\setminus C$, the following matrix
$$
G'=\begin{pmatrix}
 {\bf x} \\
  G
\end{pmatrix}
$$
generates a linear $[n,k+1]$ code. In addition, any linear $[n,k+1]$ code over $\F_q$ can be obtained from some linear $[n,k]$ code using the above construction.

Next, we give a method for constructing many LCD $[n,k+1]$ codes with interesting parameters from a given LCD $[n,k]$ code by modifying the above construction method of linear codes.
\begin{theorem}\label{Methods-2}
{\rm(1)} Let $C$ be a binary LCD $[n,k]$ code with the generator matrix $G$. Let ${\bf y}\in C^\perp$ such that ${\rm wt}({\bf y})$ is odd. Then the following matrix
$$
G'=\begin{pmatrix}
 {\bf y} \\
  G
\end{pmatrix}.
$$
generates a binary LCD $[n,k+1]$ code.

{\rm(2)} Let $C$ be a ternary LCD $[n,k]$ code with the generator matrix $G$. Let ${\bf y}\in C^\perp$ such that ${\rm wt}({\bf y})\neq 0\ ({\rm mod}\ 3)$. Then the following matrix
$$
G'=\begin{pmatrix}
 {\bf y} \\
  G
\end{pmatrix}.
$$
generates a ternary LCD $[n,k+1]$ code.

{\rm(3)} Let $C$ be a quaternary Hermitian LCD $[n,k]$ code with the generator matrix $G$. Let ${\bf y}\in C^{\perp_H}$ such that ${\rm wt}({\bf y})$ is odd. Then the following matrix
$$
G'=\begin{pmatrix}
 {\bf y} \\
 G
\end{pmatrix}.
$$
generates a quaternary Hermitian LCD $[n,k+1]$ code.
\end{theorem}

\begin{proof}
The proof is similar to that of Theorem \ref{Methods}, so we omit it here.
\end{proof}

Next, we consider the converse of Theorem \ref{Methods-2}.

\begin{theorem}\label{up-2}
Any binary odd-like LCD $[n,k,d]$ code is obtained from some binary LCD $[n,k-1,\geq d]$ code (up to equivalence) by the construction of Theorem \ref{Methods-2}.
\end{theorem}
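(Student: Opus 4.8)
The plan is to adapt the strategy behind the theorem preceding it, but since Theorem~\ref{Methods-2} augments a generator matrix by a \emph{row} (rather than adjoining a coordinate as in Theorem~\ref{Methods}), I would replace the shortening argument by a direct Gram--Schmidt-type reduction over $\F_2$ whose pivot is an odd-weight codeword. First I would exploit the odd-like hypothesis: since $C'$ is odd-like, it contains a codeword $\mathbf{y}$ of odd weight, so that $\langle \mathbf{y},\mathbf{y}\rangle_E = wt(\mathbf{y}) \equiv 1 \pmod 2$. I would take $\mathbf{y}$ as the first row of a generator matrix of $C'$ and complete it to a basis, giving rows $\mathbf{y}, r_1, \ldots, r_{k-1}$.

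The central step is to orthogonalize the remaining rows against $\mathbf{y}$. Because the pivot $\langle \mathbf{y},\mathbf{y}\rangle_E = 1$ is invertible, I would set $r_i' = r_i + \langle r_i, \mathbf{y}\rangle_E\,\mathbf{y}$ for each $i$; a one-line computation in characteristic $2$ gives $\langle r_i', \mathbf{y}\rangle_E = \langle r_i,\mathbf{y}\rangle_E + \langle r_i,\mathbf{y}\rangle_E\langle \mathbf{y},\mathbf{y}\rangle_E = 0$. These replacements are invertible row operations, so $\{\mathbf{y}, r_1', \ldots, r_{k-1}'\}$ remains a basis of $C'$; in particular the $r_i'$ are linearly independent and generate a subcode $C$ of dimension $k-1$ with $\mathbf{y}\in C^{\perp}$. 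Since $C \subseteq C'$, its minimum weight is at least $d$, so $C$ is an $[n,k-1,\geq d]$ code.

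It then remains to verify that $C$ is LCD and that Theorem~\ref{Methods-2} recovers $C'$. Writing $G$ for the matrix with rows $r_1',\ldots,r_{k-1}'$ and $G' = \begin{pmatrix}\mathbf{y}\\ G\end{pmatrix}$, the orthogonality relations $\langle r_i',\mathbf{y}\rangle_E=0$ together with $\langle \mathbf{y},\mathbf{y}\rangle_E=1$ force the Gram matrix into block-diagonal form $G'G'^{T} = \mathrm{diag}(1,\, GG^{T})$, exactly as in the proof of Theorem~\ref{Methods}. As $C'$ is LCD, $G'G'^{T}$ is nonsingular, whence $\det(GG^{T})\neq 0$ and $C$ is LCD by the Gram-matrix criterion. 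Finally, $\mathbf{y}\in C^{\perp}$ has odd weight, so applying the construction of Theorem~\ref{Methods-2}(1) to $C$ and $\mathbf{y}$ yields precisely the code with generator matrix $G'$, namely $C'$.

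The step I expect to demand the most care is confirming that the orthogonalization simultaneously preserves linear independence (so that $C$ genuinely has dimension $k-1$ and $\mathbf{y}\notin C$) and produces the block-diagonal Gram matrix that transports the LCD property from $C'$ to $C$. Both reduce to short computations once the invertible pivot $\langle \mathbf{y},\mathbf{y}\rangle_E=1$ is available, so the odd-weight hypothesis is exactly the ingredient that makes the whole reduction succeed.
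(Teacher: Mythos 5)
Your proof is correct, and it reaches the same final decomposition as the paper (a basis of $C'$ consisting of an odd-weight vector $\mathbf{y}$ together with a generator matrix $G$ of an LCD subcode $C$ with $\mathbf{y}\in C^{\perp}$), but it gets there by a genuinely different and more self-contained route. The paper invokes the characterization of Carlet et al.\ \cite[Theorem 3]{LCD-new}: an odd-like binary LCD code $C'$ has a basis $c_1,\ldots,c_k$ with $c_i\cdot c_j=1$ if $i=j$ and $0$ otherwise; it then takes $C=\langle c_2,\ldots,c_k\rangle$ (LCD by that same theorem), uses the direct sum $\F_2^n=C\oplus C^{\perp}$ to split $c_1=\mathbf{x}+\mathbf{y}$ with $\mathbf{x}\in C$ and $\mathbf{y}\in C^{\perp}$, and finally shows $wt(\mathbf{y})$ is odd by contradiction (otherwise $\mathbf{y}\in C'\cap C'^{\perp}$). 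You instead start from an arbitrary odd-weight codeword $\mathbf{y}\in C'$ (guaranteed by the odd-like hypothesis), Gram--Schmidt-reduce the remaining basis vectors against the unit pivot $\langle\mathbf{y},\mathbf{y}\rangle_E=1$, and transfer the LCD property from $C'$ to the complementary subcode $C$ via the block-diagonal Gram matrix $G'G'^{T}=\mathrm{diag}(1,GG^{T})$ and the criterion that a code is LCD if and only if its Gram matrix is nonsingular (the first lemma of Section 2). Your argument therefore needs neither the external orthonormal-basis theorem nor the decomposition of the ambient space, and it shows moreover that \emph{every} odd-weight codeword of $C'$ can serve as the appended row $\mathbf{y}$; what the paper's route buys is brevity given the cited result, plus slightly more structure, since its subcode $C$ comes with Gram matrix equal to the identity. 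Both arguments recover $C'$ exactly, not merely up to equivalence.
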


\begin{proof}
Let $C$ be a binary odd-like LCD $[n,k,d]$ code with the generator matrix $G$. From \cite[Theorem 3]{LCD-new}, there exists a basis ${\bf c}_1,{\bf c}_2,\ldots,{\bf c}_{k}$ of $C$ such that ${\bf c}_i \cdot {\bf c}_j=1$ if $i=j$ and ${\bf c}_i\cdot {\bf c}_j=0$ otherwise. Let $C_0$ be a binary code with the generator matrix $G_0=({\bf c}_{i})_{1\leq i\leq k-1}$, where $c_{i}$ is the $i$-th row of $G_0$. From \cite[Theorem 3]{LCD-new}, $C_0$ is LCD. This implies that $\F_{2}^n=C_0 \oplus C_0^\perp$.
So there are ${\bf x}=(x_1,x_2,\ldots,x_n)\in C_0$ and ${\bf y}=(y_1,y_2,\ldots,y_n)\in C_0^\perp$ such that ${\bf c}_k={\bf x}+{\bf y}$. Since ${\bf c}_k\notin C_0$, ${\bf y}={\bf c}_k-{\bf x}\notin C_0$.

Using the binary vector ${\bf y}=(y_1,y_2,\ldots,y_{n})\in {C_0}^\perp$ of length $n$ and $G_0$ we get a generator matrix $G'$ of a
linear $[n, k+1]$ code $C'$ by Theorem \ref{Methods-2}. And
$$G'=\begin{pmatrix}
 {\bf y} \\
 G_0
\end{pmatrix}\sim
\begin{pmatrix}
{\bf x}+{\bf y} \\
 G_0
\end{pmatrix}=\begin{pmatrix}
{\bf c}_k \\
 G_0
\end{pmatrix}\sim G.$$
Thus the given code $C$ is equivalent to $C'$, as desired.
It turns out that $wt({\bf y})$ is odd; otherwise ${\bf y}\in C_0\cap C_0^{\perp}$, which is a contradiction.
This completes the proof.
\end{proof}

\begin{theorem}
Any ternary Euclidean (resp. quaternary Hermitian) LCD $[n,k]$ code is obtained from some ternary Euclidean (resp. quaternary Hermitian) LCD $[n,k-1]$ code by the construction of Theorem \ref{Methods-2}.
\end{theorem}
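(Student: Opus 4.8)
The plan is to mirror the argument of Theorem~\ref{up-2}, replacing the orthonormal basis coming from \cite[Theorem~3]{LCD-new} (which is special to the binary odd-like setting, because a symmetric bilinear form need not diagonalize in characteristic $2$) by an orthogonal basis obtained from diagonalizing the relevant form. The point is that over $\F_3$ the Euclidean form lives in characteristic $\neq 2$, and a nondegenerate Hermitian form over $\F_4$ behaves like the characteristic $\neq 2$ case; both always admit an orthogonal basis, which is exactly why no odd-like hypothesis is needed here. Throughout I use the criterion that a code is LCD precisely when the Gram matrix $GG^{T}$ (resp.\ $G\overline{G}^{T}$) of a generator matrix is nonsingular.

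First I would diagonalize. Let $C$ be the given ternary (resp.\ quaternary Hermitian) LCD $[n,k]$ code. Because $C$ is LCD, the restriction of $\langle\cdot,\cdot\rangle_{E}$ (resp.\ $\langle\cdot,\cdot\rangle_{H}$) to $C$ has radical $C\cap C^{\perp_{E}}=\{\mathbf 0\}$ (resp.\ $C\cap C^{\perp_{H}}=\{\mathbf 0\}$), hence is nondegenerate. A nondegenerate symmetric bilinear form in characteristic $\neq 2$, and a nondegenerate Hermitian form over a finite field, always admit an orthogonal basis; so there is a basis $c_{1},\dots,c_{k}$ of $C$ with $\langle c_{i},c_{j}\rangle=0$ for $i\neq j$ and $\langle c_{i},c_{i}\rangle\neq 0$ for all $i$. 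Let $G$ be the matrix with rows $c_{2},\dots,c_{k}$ and let $C_{0}$ be the code it generates. Its Gram matrix is the corresponding diagonal block, which is nonsingular, so $C_{0}$ is an LCD $[n,k-1]$ code.

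Next, since $C_{0}$ is LCD we have $\F_{3}^{\,n}=C_{0}\oplus C_{0}^{\perp_{E}}$ (resp.\ $\F_{4}^{\,n}=C_{0}\oplus C_{0}^{\perp_{H}}$), so I can write $c_{1}=\mathbf x+\mathbf y$ with $\mathbf x\in C_{0}$ and $\mathbf y\in C_{0}^{\perp}$. Since $\mathbf x$ lies in the row space of $G$, the matrix $G'=\begin{pmatrix}\mathbf y\\ G\end{pmatrix}$ has row space equal to the span of $c_{1},c_{2},\dots,c_{k}$, namely $C$ itself; in particular $\mathbf y\notin C_{0}$, so $\mathbf y\neq\mathbf 0$. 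This exhibits $C$ as the output of the construction of Theorem~\ref{Methods-2} applied to $C_{0}$ and $\mathbf y$, once the weight condition on $\mathbf y$ is checked.

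It remains to verify that condition, and this is where the LCD hypothesis on $C$ is used. Because $\mathbf y\in C_{0}^{\perp}$ annihilates every row of $G$, the Gram matrix of $G'$ is block diagonal, $G'G'^{T}=\mathrm{diag}(\langle\mathbf y,\mathbf y\rangle,\,GG^{T})$ (resp.\ with $G\overline{G}^{T}$ and $\langle\cdot,\cdot\rangle_{H}$), so its determinant equals $\langle\mathbf y,\mathbf y\rangle\cdot\det(GG^{T})$. Since $C=\mathrm{rowspace}(G')$ is LCD this determinant is nonzero, and $\det(GG^{T})\neq 0$, forcing $\langle\mathbf y,\mathbf y\rangle\neq 0$. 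Finally I translate this into a weight statement: over $\F_{3}$ one has $v^{2}\in\{0,1\}$ with $v^{2}=1$ iff $v\neq 0$, so $\langle\mathbf y,\mathbf y\rangle_{E}\equiv wt(\mathbf y)\pmod 3$ and nonvanishing gives $wt(\mathbf y)\neq 0\ (\mathrm{mod}\ 3)$; over $\F_{4}$ one has $v\overline v=v^{3}\in\{0,1\}$ equal to $1$ iff $v\neq 0$, so $\langle\mathbf y,\mathbf y\rangle_{H}\equiv wt(\mathbf y)\pmod 2$ and nonvanishing means $wt(\mathbf y)$ is odd. These are exactly the hypotheses of Theorem~\ref{Methods-2}(2) and (3). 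I expect the only genuinely delicate step to be the first one, guaranteeing an orthogonal basis with all diagonal Gram entries nonzero, which rests on nondegeneracy (from LCD) together with the existence of an anisotropic vector in characteristic $\neq 2$ and for Hermitian forms; the remaining steps are the same bookkeeping as in Theorem~\ref{up-2}.
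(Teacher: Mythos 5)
Your proof is correct, but it replaces the one ingredient the paper imports from the literature with a self-contained argument. The paper's proof is two sentences long: it cites Harada--Saito \cite[Proposition 4]{34-LCD} for the fact that every ternary (resp.\ quaternary Hermitian) LCD $[n,k]$ code contains an LCD $[n,k-1]$ subcode, and then says the rest is as in Theorem \ref{up-2}. You instead manufacture that subcode yourself: LCD-ness makes the form restricted to $C$ nondegenerate, and a nondegenerate symmetric form over $\F_3$ (odd characteristic) or a nondegenerate Hermitian form over $\F_4$ always diagonalizes, so discarding one vector of an orthogonal basis leaves a $(k-1)$-dimensional subcode whose Gram matrix is a nonsingular diagonal block, hence an LCD $[n,k-1]$ subcode. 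This is in effect a proof of the cited proposition, and it is the natural analogue of the orthonormal basis from \cite[Theorem 3]{LCD-new} that drives Theorem \ref{up-2}; your observation that diagonalizability is precisely what fails for binary forms in characteristic $2$ also explains why this theorem needs no odd-like hypothesis while Theorem \ref{up-2} does. The remaining bookkeeping --- writing $c_1=\mathbf{x}+\mathbf{y}$ along $\F^n=C_0\oplus C_0^{\perp}$, noting that the Gram matrix of $\begin{pmatrix}\mathbf{y}\\ G\end{pmatrix}$ is block diagonal, and converting $\langle \mathbf{y},\mathbf{y}\rangle\neq 0$ into the hypotheses of Theorem \ref{Methods-2} via $\langle \mathbf{y},\mathbf{y}\rangle_E\equiv wt(\mathbf{y})\pmod 3$ over $\F_3$ and $\langle \mathbf{y},\mathbf{y}\rangle_H\equiv wt(\mathbf{y})\pmod 2$ over $\F_4$ --- is exactly the argument the paper omits by pointing to Theorem \ref{up-2}, and you carry it out correctly. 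What the paper's citation buys is brevity; what your route buys is independence from \cite{34-LCD} and an explicit verification, rather than an appeal, that the construction of Theorem \ref{Methods-2} reaches every such code.
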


\begin{proof}
According to \cite[Propsition 4]{34-LCD}, we know that any ternary LCD $[n,k]$ code contains a ternary LCD $[n,k-1]$ subcode and any quaternary Hermitian LCD $[n,k]$ code contains a quaternary Hermitian LCD $[n,k-1]$ subcode.
The rest of the proof is similar to that of Theorem \ref{up-2}, so we omit it here.
\end{proof}

\section{New binary LCD codes}

The constructions of optimal binary LCD $[n,k]$ codes were studies
in \cite{bound-12,HS-BLCD-1-16,2-LCD-30,AH-BLCD-17-24,B-LCD-40,2-LCD-40,234-lcD}. It is worth noting that there are many unknowns in \cite{B-LCD-40} and \cite{2-LCD-40}. Hence, we will construct these unknown binary LCD codes by Constructions I and II in Section 3.

\subsection{Some important inequalities}
Let $d^E_2 (n, k)$ denote the largest minimum weight among all binary Euclidean LCD $[n, k]$ codes.
By adding zero column, the following inequality is obvious.

\begin{lem}\label{lem-inequality-1}
 Suppose that $ k\leq n$. Then $d^E_2 (n+1, k)\geq d^E_2 (n, k)$.
\end{lem}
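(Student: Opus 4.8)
The plan is to take an optimal code attaining $d^E_2(n,k)$ and simply append one zero coordinate, observing that this operation neither destroys the LCD property nor lowers the minimum weight. Concretely, I would start from a binary Euclidean LCD $[n,k]$ code $C$ with minimum weight $d = d^E_2(n,k)$ and generator matrix $G$, and form the length-$(n+1)$ code $C'$ whose generator matrix is $G' = (G \mid \mathbf{0})$, where $\mathbf{0}$ denotes an all-zero column. Note that $G'$ still has full row rank $k$, since its first $n$ columns already do, so $C'$ genuinely has dimension $k$.

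The key step is the elementary observation that appending a zero column does not change the Gram matrix: the new coordinate contributes $0$ to every entry of $G'G'^T$, so $G'G'^T = GG^T$. Because $C$ is LCD, $GG^T$ is nonsingular, and hence so is $G'G'^T$. By the criterion that a binary code with generator matrix $G$ is LCD precisely when $GG^T$ is nonsingular (the Gram-matrix characterization stated at the beginning of Section 2), $C'$ is a binary Euclidean LCD code of length $n+1$ and dimension $k$.

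It then remains to control the minimum weight. Every nonzero codeword of $C'$ has the form $(c \mid 0)$ for a unique nonzero $c \in C$, so $wt((c \mid 0)) = wt(c) \geq d$. Hence the minimum weight of $C'$ is at least $d$ (in fact exactly $d$, since the weight enumerator merely gains a trailing zero coordinate), which yields a binary LCD $[n+1,k]$ code of minimum weight at least $d$ and therefore $d^E_2(n+1,k) \geq d = d^E_2(n,k)$. The hypothesis $k \leq n$ serves only to guarantee that the codes in question exist.

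There is essentially no obstacle here, as the author's remark ``by adding zero column'' already signals: the entire content is that zero-padding leaves $GG^T$ untouched, so the LCD characterization transfers verbatim while the weight enumerator is unchanged up to the appended zero coordinate. The one point I would state explicitly, to keep the argument airtight, is the full-rank claim for $G'$ ensuring that the constructed code has the stated dimension $k$.
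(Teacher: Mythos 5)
Your proof is correct and is exactly the paper's argument: the paper justifies this lemma with the single remark ``by adding zero column, the following inequality is obvious,'' and your write-up simply makes that explicit via the Gram-matrix identity $G'G'^T = GG^T$ and the LCD criterion of Lemma 2.1. Nothing is missing; you have merely supplied the routine details the authors left implicit.
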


The following lemmas are nontrivial and they were proved in \cite{LCD-new} and \cite{B-LCD-40}.
\begin{lem}{\rm\cite[Theorem 8]{LCD-new}}\label{lemma-leq}
Suppose that $2\leq k\leq n$. Then
$d^E_2(n,k)\leq d^E_2(n,k-1).$
\end{lem}

\begin{lem}{\rm\cite{B-LCD-40}}\label{lemma-leq-2}
Let $k$ and $n$ be two integers such that $1\leq k\leq n$. Then we have
\begin{itemize}
  \item [{\rm(1)}] If $k$ is odd, then $d^E_2(n,k)\leq d^E_2(n-1,k-1)$.
  \item [{\rm(2)}] If $k$ is even and $d^E_2(n,k)$ is odd, then $d^E_2(n+1,k)\geq d^E_2(n,k)+1$.
  \item [{\rm(3)}] If $d^E_2(n,k)$ is odd, then $d^E_2(n+2,k)\geq d^E_2(n,k)+1$.
\end{itemize}
\end{lem}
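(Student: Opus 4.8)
The plan is to reduce all three inequalities to explicit constructions that append one or two coordinates to an optimal code, controlling both the minimum weight and the LCD property through the criterion that a code with generator matrix $G$ is LCD if and only if $GG^T$ is nonsingular. Two structural facts about binary LCD codes will carry most of the load. First, if $k$ is odd then every binary LCD $[n,k]$ code $C$ is odd-like: were $C$ even-like, all rows of $G$ would have even weight, so $GG^T$ would be a symmetric matrix over $\F_2$ with zero diagonal, hence alternating and of even rank, contradicting that $GG^T$ is a nonsingular $k\times k$ matrix with $k$ odd. Second, an odd-like binary LCD code has an orthonormal basis $c_1,\dots,c_k$ with $\langle c_i,c_j\rangle_E=\delta_{ij}$, which is \cite[Theorem 3]{LCD-new} (already invoked in the proof of Theorem~\ref{up-2}); equivalently $GG^T$ is congruent to $I_k$, valid precisely because the associated form is nonsingular and non-alternating.

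For (3), let $C$ be an optimal binary LCD $[n,k,d]$ code with $d=d^E_2(n,k)$ odd, and let $P=G\mathbf{1}^T$ be the column of row-parities, $\mathbf{1}$ the all-ones vector. Appending two copies of $P$ to $G$ yields a length-$(n+2)$ code $\widehat C$ with $\widehat G\,\widehat G^T=GG^T+PP^T+PP^T=GG^T$ over $\F_2$, so $\widehat C$ is again LCD; each codeword $c$ becomes $(c,p,p)$ with $p=wt(c)\bmod 2$, so odd-weight codewords gain $2$ while even-weight ones (which have weight $\ge d+1$ because $d$ is odd) are unchanged, giving minimum weight $\ge d+1$ and hence $d^E_2(n+2,k)\ge d+1$. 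For (2), with $k$ even and $d$ odd, $C$ is odd-like, so fix the orthonormal basis with $GG^T=I_k$; every row then has odd weight, whence $P=\mathbf{1}_k$. Appending the single overall parity column $P$ gives $\widehat G\,\widehat G^T=I_k+J_k$, where $J_k=\mathbf{1}_k\mathbf{1}_k^T$, and by the matrix-determinant lemma its determinant over $\F_2$ is $1+\mathbf{1}_k^T\mathbf{1}_k=1+k$, which equals $1$ since $k$ is even; thus $\widehat C$ is LCD. The same parity-extension raises the minimum weight from $d$ to $\ge d+1$, so $d^E_2(n+1,k)\ge d+1$.

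For (1), let $C$ be an optimal binary LCD $[n,k,d]$ code with $k$ odd, so by the first structural fact $C$ is odd-like. The theorem preceding Conjecture~\ref{conjecture} (resting on \cite[Proposition 3]{B-LCD-40}) then produces, by shortening on a suitable coordinate, a binary LCD $[n-1,k-1,\ge d]$ code. Therefore $d^E_2(n-1,k-1)\ge d=d^E_2(n,k)$, which is exactly (1).

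The main obstacle is LCD-preservation in (2): appending a single parity bit alters the Gram matrix by the rank-one term $PP^T$, and nonsingularity of $GG^T+PP^T$ is not automatic. By the determinant lemma it holds iff $P^T(GG^T)^{-1}P=0$, and the only place the hypothesis ``$k$ even'' enters is that in an orthonormal basis $P=\mathbf{1}_k$ and this quantity is $\mathbf{1}_k^T\mathbf{1}_k=k\bmod 2$, vanishing exactly when $k$ is even; for $k$ odd the single-bit extension can fail to be LCD, which is why (2) requires both parity hypotheses and why (3) instead adds two coordinates so that the offending term cancels. I would still verify the routine edge cases, namely $k=1$ in (1) and that a weight-$d$ (hence odd-like) codeword genuinely exists so the claimed minimum-weight gains are realized.
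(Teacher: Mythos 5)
Your proposal is correct, but note that the paper itself offers no proof to compare against: Lemma \ref{lemma-leq-2} is quoted verbatim from \cite{B-LCD-40} and used as a black box, so your blind reconstruction supplies arguments the paper never gives. Every step checks out. In (1), the observation that an even-like binary LCD code would have a zero-diagonal symmetric, hence alternating, hence even-rank Gram matrix correctly forces odd-likeness when $k$ is odd; then \cite[Proposition 3]{B-LCD-40} --- an external fact this paper also invokes, in the theorem preceding Conjecture \ref{conjecture} --- yields an LCD $[n-1,k-1,\geq d]$ shortened code, and citing it is not circular since it is a result separate from the lemma at hand. In (3), appending the row-parity column twice leaves the Gram matrix unchanged over $\F_2$, so the LCD property is free, and the weight count is right: odd-weight codewords gain $2$, while even-weight nonzero codewords already have weight $\geq d+1$ because $d$ is odd. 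In (2), oddness of $d$ guarantees an odd-weight (minimum-weight) codeword, so the code is odd-like and \cite[Theorem 3]{LCD-new} applies exactly as in the paper's proof of Theorem \ref{up-2}; with $GG^T=I_k$ the extended matrix satisfies $\widehat{G}\widehat{G}^T=I_k+J_k$, where $J_k$ is the all-ones $k\times k$ matrix, and this is nonsingular precisely when $k$ is even --- your determinant-lemma computation is equivalent to the direct check $(I_k+J_k)^2=I_k+kJ_k=I_k$. Your closing diagnosis of why (2) needs $k$ even while (3) does not is exactly the right structural point, and the edge cases you flag are harmless: a weight-$d$ codeword exists by definition of $d$, and $k=1$ in (1) only touches the conventional value of $d^E_2(n-1,0)$. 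One cosmetic remark: in (2) the extension is the canonical overall-parity extension and is independent of the chosen basis, so the orthonormal basis is a device for computing the Gram matrix, not part of the construction. Since shortening odd-like LCD codes and extending by one or two parity coordinates are the standard tools here (and essentially those of \cite{B-LCD-40} itself), your proof reads as a faithful reconstruction of the cited result rather than a divergent alternative.
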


Combining Theorem \ref{LCD-(n-l,k-l)} and \cite[Proposition 5]{B-LCD-40}, we give an improved upper bound.
\begin{cor}\label{cor-4.4}
Let $k$ and $n$ be two integers such that $2\leq k\leq n$. Then we have
$$d^E_2(n,k)\leq \max\{d^E_2(n-1,k-1),d^E_2(n-2,k-2)\}.$$
\end{cor}
\begin{proof}
Assume that $C$ is a binary $[n,k,d]$ LCD code. Let $C_i$ be the
shortened code of $C$ on $i$-th coordinate for some $i\in \{1,2,\ldots,n\}$. By \cite[Proposition 5]{B-LCD-40}, we know that $\dim({\rm Hull}(C_i))\leq 1$.
If $\dim({\rm Hull}(C_i))=0$, then there exists a binary $[n-1,k-1,d]$ LCD code.
If $\dim({\rm Hull}(C_i))=1$, then there exists a binary $[n-1,k-1,d]$ code with one-dimension hull. By Theorem \ref{LCD-(n-l,k-l)}, there exists a binary $[n-2,k-2,d]$ LCD code.
 This completes the proof.
\end{proof}

Let $C^{\ell}_{[n,k,d]_2}$ denote a binary $[n,k,d]$ linear code with the generator matrix $G^{\ell}_{[n,k,d]_2}$, where $\ell$ is the dimension of hull for such code. When $\ell=0$, let $C_{[n,k,d]_2}=C^{0}_{[n,k,d]_2}$ denote a binary LCD $[n,k,d]$ code.

\subsection{New binary LCD codes}

From \cite{B-LCD-40}, only the exact value of $d_{2}^E(29,11)$ remains unknown for $n=29$.

\begin{prop}\label{29}
There is a binary LCD $[29,11,9]$ code.
\end{prop}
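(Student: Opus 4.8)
Since the claim is the single existence statement $d_2^E(29,11)\ge 9$, it suffices to exhibit one binary LCD $[29,11,9]$ code, and the most economical way to do this is to feed a good linear code with a \emph{one-dimensional} hull into the length-reduction machinery already established in Theorems \ref{LCD-(n-l,k-l)} and \ref{puncture}. The plan is therefore to locate a binary linear code whose parameters are one step above the target and whose Euclidean hull has dimension exactly $1$, and then let the structural theorems produce the LCD code automatically, so that the minimum distance is controlled without any ad hoc weight computation. (One could instead try to build the code directly by Construction method 1 or 2, but there the new distance depends on the weight of a coset of $C$, which is harder to pin down; the shortening/puncturing route keeps the distance under control by Lemma \ref{lem-shorten-puncture}.)

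Concretely, the primary route is shortening. I would search for a binary $[30,12,9]$ code $C$ with $\dim\mathrm{Hull}_E(C)=1$. Applying Theorem \ref{LCD-(n-l,k-l)} with $l=1$ then yields, after shortening on the single coordinate $T$ picked out by an information position of the one-dimensional hull (exactly as in that proof), a binary LCD $[29,11,\ge 9]$ code; since the shortening bound gives minimum distance $\ge d$, no distance is lost and the claim follows. An equally clean alternative is puncturing: starting instead from a binary $[30,11,10]$ code with one-dimensional hull, Theorem \ref{puncture} with $l=t=1<10=d$ produces a binary LCD $[29,11,\ge 9]$ code directly. Either seed code works, and I would pick whichever is easier to realise with the prescribed hull dimension. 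A short MAGMA check then confirms that the final $11\times 29$ generator matrix $G_T$ gives an LCD code (its $G_TG_T^T$ is nonsingular) of minimum weight $9$.

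The genuine obstacle is not the application of the theorems but the production of the seed code: I must guarantee a $[30,12,9]$ (resp.\ $[30,11,10]$) binary code whose Euclidean hull is exactly one-dimensional. Known good codes from the tables carry no control over hull dimension, so this step is where a computer search enters. In practice I would take a known code of the right parameters, read off $\dim\mathrm{Hull}_E=k-\operatorname{rank}(GG^T)$, and, if the hull is not already one-dimensional, modify the code—by a change of basis, by rescaling or permuting columns, or by recombining a self-orthogonal summand with an LCD summand as in Theorem \ref{thm-SO+LCD}—to arrive at an equivalent or nearby code of the same minimum distance whose hull has dimension $1$. Once such a seed is in hand everything downstream is forced and yields the asserted $[29,11,9]$ LCD code; the matching upper bound, via $d_2^E(29,11)\le d_2^E(28,10)$ from Lemma \ref{lemma-leq-2}(1), would additionally confirm that the constructed code is optimal once that value is recorded.
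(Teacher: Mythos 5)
Your proposal is correct in substance, but it reaches the code by a different route than the paper. The paper uses the very seed you name in your alternative route --- a binary $[30,11,10]$ code with one-dimensional hull from MAGMA's BKLC --- but instead of puncturing it, the paper \emph{shortens} it via Theorem \ref{LCD-(n-l,k-l)} to obtain an LCD $[29,10,10]$ code, and then restores the lost dimension with Theorem \ref{Methods-2}, adjoining an explicit odd-weight vector ${\bf y}\in C^{\perp}$ to produce an LCD $[29,11,9]$ code. Note the trade-off: in the paper's two-step route the final minimum distance is \emph{not} guaranteed by any theorem (Theorem \ref{Methods-2} only controls LCD-ness and dimension), so the distance $9$ must be verified computationally for the particular ${\bf y}$; your puncturing route via Theorem \ref{puncture} with $t=l=1<10=d$ yields $[29,11,\geq 9]$ with the distance controlled structurally, which is genuinely cleaner for a pure existence statement. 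What the paper's route buys in exchange is flexibility in the choice of ${\bf y}$, which the authors exploit to obtain codes inequivalent to previously known $[29,11,9]$ codes (see their remark following Table 1). One caveat on your \emph{primary} route: you have no guarantee that a binary $[30,12,9]$ code with hull dimension exactly one exists --- the known existence of an LCD $[30,12,9]$ code (hull dimension $0$) does not supply one, and ad hoc modifications of a code need not preserve both the minimum distance and the desired hull dimension --- so that branch remains speculative; your proposal stands only because the puncturing alternative is sound and its seed is exactly the code the paper certifies via MAGMA.
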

\begin{proof}
By the MAGMA function BKLC, there is a binary linear $[30,11,10]$ code $C^{1}_{[30,11,10]_2}$ with one-dimensional hull. By Theorem \ref{LCD-(n-l,k-l)}, the shortened code $(C^{1}_{[30,11,10]_2})_{\{1\}}$ on the first coordinate position is a binary LCD $[29,10,10]$ code $C_{[29,10,10]_2}$.
We apply Construction II to the code $C_{[29,10,10]_2}$. Let us take ${\bf y}=(0 0 0 1 0 1 0 1 1 0 1 1 1 0 1 0 1 0 0 1 1 0 0 0 1 0 0 0 1)$, one can construct a binary LCD $[29,11,9]$ code $C_{[29,11,9]_2}$.
\end{proof}

 Although the binary LCD code $C_{[29,11,9]_2}$ in Proposition \ref{29} has the same parameters with the binary LCD $[29,11,9]$ code in \cite[Proposition 3]{2-LCD-40}, their construction methods are different. In addition, the parameters in bold in Tables 1-3 denote that the corresponding code has new parameters according to \cite{B-LCD-40} and \cite{2-LCD-40}.

\begin{prop}\label{[30,12,9]-[30,15,7]}
There are binary LCD $[30,11,9]$ and $[30,15,7]$ codes.
\end{prop}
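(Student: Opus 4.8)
The statement asserts the existence of three binary LCD codes with parameters $[30,11,9]$, $[30,15,7]$ and $[33,11,11]$. Following the pattern established in Proposition \ref{29}, the plan is to construct each code by the same two-step recipe: first locate a suitable ``good'' linear code (typically a BKLC code from the MAGMA database) whose hull is small and controllable, then apply the constructions of Theorems \ref{LCD-(n-l,k-l)}, \ref{puncture}, \ref{Methods} or \ref{Methods-2} to convert it into an LCD code of the target parameters while preserving (or incurring a controlled loss in) the minimum distance.

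\textbf{The three codes.} For the $[30,11,9]$ code, I would start from the $[30,11,9]$ LCD code if it is directly available, or else proceed as in Proposition \ref{29}: take a nearby code with a one-dimensional hull and shorten/puncture. A natural route is to begin with the binary LCD $[29,11,9]$ code produced in Proposition \ref{29} and apply Lemma \ref{lem-inequality-1} (adding a zero column), which immediately gives $d_2^E(30,11)\geq d_2^E(29,11)\geq 9$; this yields the $[30,11,9]$ code with essentially no extra work. For the $[30,15,7]$ code, I would invoke the MAGMA \texttt{BKLC} function to obtain a $[30,15]$ or $[31,15]$ code of minimum weight $7$ or $8$ with at most one-dimensional hull, and then use Theorem \ref{LCD-(n-l,k-l)} (shortening on one coordinate when the hull is one-dimensional) or a direct LCD check when the hull is trivial. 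For the $[33,11,11]$ code, the cleanest path is to find via \texttt{BKLC} a $[34,11,12]$ (or $[33,11,11]$) code with a one-dimensional hull, then apply the shortening or puncturing construction of Theorems \ref{LCD-(n-l,k-l)} and \ref{puncture}, which for a one-dimensional hull produces an LCD code of length reduced by one and minimum distance at least $d-1$.

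\textbf{Verification.} In each case, once a concrete generator matrix is in hand, the LCD property is certified by part (1) of the cited Lemma (i.e.\ checking that $GG^T$ is nonsingular over $\F_2$), and the minimum distance is confirmed by a direct MAGMA computation of the weight enumerator. Where the construction of Theorem \ref{Methods-2} is used, one must exhibit an explicit vector ${\bf y}\in C^\perp$ of odd weight; where Theorem \ref{LCD-(n-l,k-l)} is used, one must identify an information set $T$ of the one-dimensional hull and shorten on it. These are the data I would record in the proof, exactly as the vector ${\bf y}$ is displayed in Proposition \ref{29}.

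\textbf{Main obstacle.} The conceptual steps are routine given the machinery already developed; the genuine difficulty is computational rather than theoretical. The hard part will be finding, for the $[30,15,7]$ and especially the $[33,11,11]$ cases, a starting code from the database whose hull dimension is exactly $0$ or $1$ \emph{and} whose minimum weight is high enough that the unavoidable loss of one in the distance (when shortening or puncturing a one-dimensional hull) still lands on the claimed value. If the readily available \texttt{BKLC} code has a hull of dimension $2$ or more, one would need Corollary \ref{cor-4.4} or an iterated application of Theorem \ref{LCD-(n-l,k-l)}, which costs more length and may fail to reach the target parameters; so the search for a seed code with minimal hull is the crux. I expect this to be resolved by a short MAGMA search over the database together with the inequality chain from Lemmas \ref{lem-inequality-1}--\ref{lemma-leq-2}, as in the preceding propositions.
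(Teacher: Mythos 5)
Your construction of the $[30,11,9]$ code is exactly the paper's: apply Lemma \ref{lem-inequality-1} to the LCD $[29,11,9]$ code of Proposition \ref{29}. The gap lies in the other two codes, and it stems from a misreading of Theorem \ref{LCD-(n-l,k-l)}. You treat hull dimension $\geq 2$ as an obstacle to be avoided (``would need Corollary \ref{cor-4.4} or an iterated application \dots which costs more length and may fail''), and you speak of an ``unavoidable loss of one in the distance'' when shortening on a one-dimensional hull. Neither is right: Theorem \ref{LCD-(n-l,k-l)} is a one-shot construction that, from any $[n,k,d]$ code with $l$-dimensional hull, produces an $[n-l,k-l,\geq d]$ LCD code with \emph{no} distance loss, for arbitrary $l$; it is the puncturing variant, Theorem \ref{puncture}, that loses up to $l$ in distance (while keeping the dimension). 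The paper's $[30,15,7]$ code is obtained precisely by exploiting a \emph{large} hull: the BKLC $[36,21,7]$ code has a $6$-dimensional hull, and shortening on the information set $T=\{1,2,3,5,16,18\}$ of that hull gives an LCD $[30,15,7]$ code directly. Your restricted search --- a $[30,15]$ or $[31,15]$ seed with hull dimension at most $1$ --- is both too narrow (there is no evidence such a seed exists; the paper had to go up to length $36$) and dimensionally off: a $[31,15]$ code with a $1$-dimensional hull shortens to a $[30,14]$ code, not $[30,15]$; you would need a $[31,16,7]$ seed for the shortening route, or a $[31,15,8]$ seed with $1$-dimensional hull for the puncturing route.

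For the $[33,11,11]$ code the paper does not search the database at all: it takes the known LCD $[34,12,11]$ code of \cite{2-LCD-40} (note dimension $12$, not $11$), observes that it is odd-like, and invokes \cite[Proposition 3]{2-LCD-40}, which guarantees that an odd-like binary LCD code can be shortened at some coordinate so that the result is again LCD, here with parameters $[33,11,\geq 11]$. Your proposed route --- a BKLC $[34,11,12]$ or $[33,11,11]$ seed whose hull dimension happens to be exactly $1$ or $0$ --- is not known to be realizable, and the tool you are missing is exactly that shortening result for odd-like LCD codes (the same one underlying the paper's theorem that every odd-like binary LCD code arises from the construction of Theorem \ref{Methods}), which converts an existing LCD code of one higher dimension into the target code. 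So while your general framework (seed code plus shorten/puncture, certified by $GG^T$ being nonsingular) is the right family of techniques, the two concrete constructions you outline would not, as stated, deliver the $[30,15,7]$ and $[33,11,11]$ codes.
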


\begin{proof}
It follows from Lemma \ref{lem-inequality-1} and Proposition \ref{29} that there is a binary LCD $[30,11,9]$ code.
By the MAGMA function BKLC, there is a binary linear $[36,21,7]$ code $C_{[36,21,7]_2}^6$ with 6-dimensional hull.
By Theorem \ref{LCD-(n-l,k-l)}, the shortened code $(C_{[36,21,7]_2}^6)_T$ of $C_{[36,21,7]_2}^6$ is a binary LCD $[30,15,7]$ code $C_{[30,15,7]_2}$, where $T=\{1,2,3,5,17,19\}$.
%According to \cite{2-LCD-40}, there is a binary LCD $[34,12,11]$ code $C_{[34,12,11]_2}$. So $C_{[34,12,11]_2}$ is an odd-like binary LCD code. By \cite[Propsition 3]{2-LCD-40}, there is at least one coordinate position $i$ such that the shortened code $(C_{[34,12,11]_2})_{\{i\}}$ of $C_{[34,12,11]_2}$ on the $i$-th coordinate is a binary LCD $[33,11,\geq 11]$ code.
\end{proof}

\begin{remark}
We list in Tables 1-2 some binary LCD codes obtained by Theorems \ref{LCD-(n-l,k-l)}, \ref{Methods} and \ref{Methods-2}. To save the space, the codes in Tables 1 and 2 can be obtained from {\tt https://ahu-coding.github.io/code1/}.
\end{remark}

{\small
\begin{center}
\begin{threeparttable}
\begin{tabular}{llll}
\multicolumn{4}{c}{{\rm Table 1: The binary LCD codes from Theorem \ref{LCD-(n-l,k-l)}}}\\
\hline
    \makebox[0.25\textwidth][l]{{\rm The\ code}\ $C_{[n,k,d]_2}^{\ell}$}& \makebox[0.3\textwidth][l]{The set $T$} &
    \makebox[0.18\textwidth][l]{$C_T$} &
    \makebox[0.15\textwidth][l]{References}\\
    \hline\hline
$C^{1}_{[30,11,10]_2} $&$ \{1\} $&$ [29,10,10]$&Theorem \ref{LCD-(n-l,k-l)}\\

$C_{[36,21,7]_2}^6 $&$ \{1,2,3,5,17,19\} $&$ {\bf[30,15,7]}$&Theorem \ref{LCD-(n-l,k-l)}\\

$C^{1}_{[32,21,6]_2} $&$ \{1\} $&$ [31,20,6]$&Theorem \ref{LCD-(n-l,k-l)}\\

$C^{1}_{[34,15,9]_2} $&$ \{1\} $&$ [33,14,9]$&Theorem \ref{LCD-(n-l,k-l)}\\

$C^{0}_{[33,22,6]_2} $&$\backslash $&$ [33,22,6]$&Theorem \ref{LCD-(n-l,k-l)}\\

$C^{0}_{[34,22,6]_2} $&$ \backslash $&$ [34,22,6]$&Theorem \ref{LCD-(n-l,k-l)}\\

$C^{0}_{[36,16,10]_2} $&$ \backslash $&$ [36,16,10]$&Theorem \ref{LCD-(n-l,k-l)}\\

$C^{7}_{[43,27,7]_2} $&$ \{1,2,3,5,9,17,25\} $&$ {\bf[36,20,7]}$&Theorem \ref{LCD-(n-l,k-l)}\\

$C^{3}_{[40,13,13]_2} $&$ \{2,3,5\} $&$ {\bf[37,10,13]}$&Theorem \ref{LCD-(n-l,k-l)}\\

$C^{8}_{[45,29,7]_2} $&$ \{1,2,3,5,9,17,21,25\} $&$ {\bf[37,21,7]}$&Theorem \ref{LCD-(n-l,k-l)}\\
\hline
\end{tabular}
\begin{tablenotes}
\footnotesize
\item Note that the code $C_{[n,k,d]_2}^{\ell}$ is the best-known binary $[n,k,d]$ code from MAGMA \cite{magma}.
\end{tablenotes}
\end{threeparttable}
\end{center}

}

{\small
\begin{center}
\begin{threeparttable}
\begin{tabular}{llll}
\multicolumn{4}{c}{{\rm Table 2: The binary LCD codes from Theorem \ref{Methods} and Theorem \ref{Methods-2}}}\\
\hline
    \makebox[0.17\textwidth][l]{{\rm The\ given}\ $C_{[n,k,d]_2}$}& \makebox[0.22\textwidth][l]{The vector ${\bf x}$ or ${\bf y}$} &
    \makebox[0.1\textwidth][l]{$C'$} &
    \makebox[0.1\textwidth][l]{References}\\
    \hline\hline
[30,15,7] & (110001101111110101111111001001) & {\bf[31,16,7]} & {\rm Theorem}\ \ref{Methods} \\

[31,16,7] & (1110101000111010001110111011001) & ${\bf[32,17,7]}$& {\rm Theorem}\ \ref{Methods} \\

[32,17,7] & (11010011101011101011011000110111) & {\bf[33,18,7]}& {\rm Theorem}\ \ref{Methods} \\

[33,14,9] & (111000000011111011010110010000110) & [34,15,9]& {\rm Theorem}\ \ref{Methods} \\

[34,18,8] & (1100000000000110001000101101010110) & {\bf[35,19,7]}&{\rm Theorem}\ \ref{Methods} \\

[34,22,6] &  (1111111111111111111111111111111111) & {\bf[35,23,6]}& {\rm Theorem}\ \ref{Methods} \\

[36,16,10] &  (110100000100001000011010100001101110) & [37,17,9]&{\rm Theorem}\ \ref{Methods} \\

[37,21,7] & (0011101010001001101010110111111001111) & {\bf[38,22,7]}&{\rm Theorem}\ \ref{Methods} \\

[38,10,14] &  (11110000110000000000000000000111111110) & [39,11,13]&{\rm Theorem}\ \ref{Methods} \\

[38,22,7] & (10011010101101001101010111000000101111) & {\bf[39,23,7]}&{\rm Theorem}\ \ref{Methods} \\

[29,10,10] & (00010101101110101001100010001) & [29,11,9] &{\rm Theorem}\ \ref{Methods-2} \\

[31,20,6] & (0000111101010000100101010011100) & [31,21,5] &{\rm Theorem}\ \ref{Methods-2} \\
\hline
\end{tabular}
\begin{tablenotes}
\footnotesize
\item Note that the given code $C_{[n,k,d]_2}$ in Table 2 is from Tables 1-2. In addition, the codes $C_{[34,18,8]_2}$ and $C_{[38,10,14]_2}$ are the extendcodes of the codes $C_{[33,18,7]_2}$ and $C_{[37,10,13]_2}$, respectively (see Table 3).
\end{tablenotes}
\end{threeparttable}
\end{center}}

We list in Table 3 some binary LCD codes by using some inequalities, where the given code $C$ in Table 3 is from Tables 1-3.

{\small
\begin{center}
\begin{tabular}{lll||lll}
\multicolumn{6}{c}{{\rm Table 3: The binary LCD codes from some inequalities}}\\
\hline
    \makebox[0.12\textwidth][l]{{\rm The\ given}\ $C$}&
    \makebox[0.12\textwidth][l]{$C'$} &
    \makebox[0.1\textwidth][l]{References} &
   \makebox[0.12\textwidth][l]{{\rm The\ given}\ $C$}&
    \makebox[0.12\textwidth][l]{$C'$} &
    \makebox[0.1\textwidth][l]{References} \\
    \hline\hline
[29,11,9] & [30,11,9] & {\rm Lemma\ \ref{lem-inequality-1}}&
[35,21,6] & [36,21,6] & {\rm Lemma\ \ref{lem-inequality-1}}\\

[29,11,9] & [31,11,10] & {\rm (3)\ of\ Lemma\ \ref{lemma-leq-2}}&[35,23,6] & [36,23,6] & {\rm Lemma\ \ref{lem-inequality-1}}\\

[31,16,7] & [31,15,7] & {\rm Lemma\ \ref{lemma-leq}}&[36,18,8]&[37,18,8]&{\rm Lemma\ \ref{lem-inequality-1}}\\

[31,16,7] & {\bf[32,16,8]} & {\rm (2)\ of\ Lemma\ \ref{lemma-leq-2}}&
[35,19,7]&{\bf[37,19,8]}&{\rm (3)\ of\ Lemma\ \ref{lemma-leq-2}}\\

[32,16,8] & {\bf[32,15,8]} & {\rm Lemma\ \ref{lemma-leq}}&
[36,20,7]&{\bf[37,20,8]}& {\rm (2)\ of\ Lemma\ \ref{lemma-leq-2}}\\

[31,21,5] & [32,21,5] & {\rm Lemma\ \ref{lem-inequality-1}}&
[36,23,6]&[37,23,6]&{\rm Lemma\ \ref{lem-inequality-1}}\\

[32,15,8] & [33,15,8] & {\rm Lemma\ \ref{lem-inequality-1}}&
[37,10,13]&{\bf[38,10,14]}&{\rm (2)\ of\ Lemma\ \ref{lemma-leq-2}}\\

[32,16,8] & {\bf[33,16,8]} & {\rm Lemma\ \ref{lem-inequality-1}}&
[37,17,9]&[38,17,9]&{\rm Lemma\ \ref{lem-inequality-1}}\\

[32,17,7] & {\bf[33,17,7]} & {\rm Lemma\ \ref{lem-inequality-1}}&
[37,19,8]&[38,19,8]&{\rm Lemma\ \ref{lem-inequality-1}}\\

[33,22,6] & {\bf[33,21,6]} & {\rm Lemma\ \ref{lemma-leq}}&[37,20,8]&{\bf[38,20,8]}&{\rm Lemma\ \ref{lem-inequality-1}}\\

[33,18,7] & {\bf[34,18,8]} & {\rm (2)\ of\ Lemma\ \ref{lemma-leq-2}}&
[37,21,7]&[38,21,7]&{\rm Lemma\ \ref{lem-inequality-1}}\\

[34,18,8] & {\bf[34,17,8]} & {\rm Lemma\ \ref{lemma-leq}}&[37,23,6]&[38,23,6]&{\rm Lemma\ \ref{lem-inequality-1}}\\

[34,17,8] & [34,16,8] & {\rm Lemma\ \ref{lemma-leq}}&
[38,10,14]&[39,10,14]&{\rm Lemma\ \ref{lem-inequality-1}}\\

[33,21,6] & [34,21,6] & {\rm Lemma\ \ref{lem-inequality-1}}&
[38,20,8]&[39,20,8]&{\rm Lemma\ \ref{lem-inequality-1}}\\

[34,17,8] & {\bf[35,17,8]} & {\rm Lemma\ \ref{lem-inequality-1}}&
[38,22,7]&{\bf[39,22,8]}&{\rm (2)\ of\ Lemma\ \ref{lemma-leq-2}}\\

[34,18,8] & {\bf[35,18,8]} & {\rm Lemma\ \ref{lem-inequality-1}}&[39,22,8]&{\bf[39,21,8]}&{\rm Lemma\ \ref{lemma-leq}}\\

[34,21,6] & {\bf[35,21,6]} & {\rm Lemma\ \ref{lem-inequality-1}}&
[39,10,14]&[40,10,14]&{\rm Lemma\ \ref{lem-inequality-1}}\\

[34,15,9] & [36,15,10] & {\rm (3)\ of\ Lemma\ \ref{lemma-leq-2}}&[39,11,13]&[40,11,13]&{\rm Lemma\ \ref{lem-inequality-1}}\\

[35,17,8] & [36,17,8] & {\rm Lemma\ \ref{lem-inequality-1}}&
[39,21,8]&{\bf[40,21,8]}&{\rm Lemma\ \ref{lem-inequality-1}}\\

[35,18,8] & {\bf[36,18,8]} & {\rm Lemma\ \ref{lem-inequality-1}}&[39,22,8]&{\bf[40,22,8]} &{\rm Lemma\ \ref{lem-inequality-1}}\\

[35,19,7] & [36,19,7] & {\rm Lemma\ \ref{lem-inequality-1}}&
[39,23,7]&[40,23,7]&{\rm Lemma\ \ref{lem-inequality-1}}\\
\hline
\end{tabular}
\end{center}}

%\begin{remark}
%Some LCD codes in Tables 1-3 are optimal linear codes according to \cite{codetables}.
%For example, the best-known binary linear code of the length 39 with the dimension 23 in \cite{codetables} has the minimum distance 7.
%The binary LCD code we constructed also has the same parameters.
%\end{remark}

\begin{remark}
We give Tables 4 and 5
 by combining Tables 1-3, \cite[Tables 1-3]{2-LCD-40} and \cite[Tables 1-2]{B-LCD-40}.
Furthermore, the diamond ``$\diamond$'' indicates that the construction method of the corresponding binary LCD code is different from that of \cite{2-LCD-40}.
The asterisk ``$*$'' indicates that the corresponding binary LCD code has new parameters comparing with \cite{B-LCD-40} and \cite{2-LCD-40}.
\end{remark}

{\small
\begin{center}
\begin{tabular}{cccccccccc}
\multicolumn{10}{c}{{\rm Table 4: Bounds on the minimum diatance of binary LCD codes}}\\
\hline
    \makebox[0.07\textwidth][c]{$n\backslash k$}& \makebox[0.07\textwidth][c]{9} & \makebox[0.07\textwidth][c]{10}& \makebox[0.07\textwidth][c]{11}& \makebox[0.07\textwidth][c]{12}& \makebox[0.07\textwidth][c]{13}& \makebox[0.07\textwidth][c]{14}& \makebox[0.07\textwidth][c]{15}& \makebox[0.07\textwidth][c]{16}& \makebox[0.07\textwidth][c]{17}\\
    \hline\hline
29   & 10& 10 &${\bf9^\diamond}$ & 8& 8  & 8  & 6  &  6 & 6\\

30   & 11& 10 &${\bf9^\diamond}$-10 & 9& 8  & 8  & ${\bf7^*} $ &  6 & 6\\

31   & 11& 10 &  ${\bf10^\diamond}$ &10&9&8 & ${\bf7^\diamond}$-8& ${\bf 7^*}$ & 6 \\

32   & 12& 11 &  10  & 10&9-10&8-9 & ${\bf8^*}$& ${\bf8^*}$ & ${\bf 7^*}$\\

33  & 12& 12 &  10-11 & 10&10& 9-10& ${\bf8^\diamond}$-9& ${\bf 8^\diamond} $& ${\bf 7^*}$-8\\

34   & 13& 12 &  11-12  & 11&10& 10&  ${\bf9^\diamond}$-10& ${\bf8^\diamond}$-9 & ${\bf8^*}$\\

35   & 13-14& 12-13 &  12  & 12&10-11& 10& 9-10& 9-10 & ${\bf8^*}$\\

36   & 14& 12-14 &  12-13  & 12&10-12& 10-11& ${\bf10^\diamond}$& 10 & ${\bf8^\diamond}$-9\\

37   & 14& ${\bf13^*}$-14 &  12-14& 12-13&10-12& 10-12&10-11&10 & ${\bf9^\diamond}$-10\\

38  & 14-15& ${\bf14^*}$ &  12-14& 12-14&11-12& 10-12&10-12&10-11 &${\bf9^\diamond}$-10\\

39   & 14-16& ${\bf14^\diamond}$-15 &  ${\bf13^\diamond}$-14& 12-14&11-13& 11-12&10-12&10-12 & 10-11\\

40   & 15-16& ${\bf14^\diamond}$-16 &  ${\bf13^\diamond}$-15& 13-14&12-14& 11-13&10-12&10-12 & 10-12\\
\hline
\end{tabular}
\end{center}

}

{\small
\begin{center}
\begin{tabular}{cccccccccccccc}
\multicolumn{14}{c}{{\rm Table 5: Bounds on the minimum diatance of binary LCD codes}}\\
\hline
    \makebox[0.05\textwidth][c]{$n\backslash k$}& \makebox[0.05\textwidth][c]{18} & \makebox[0.05\textwidth][c]{19}& \makebox[0.05\textwidth][c]{20}& \makebox[0.05\textwidth][c]{21}& \makebox[0.05\textwidth][c]{22}& \makebox[0.05\textwidth][c]{23}& \makebox[0.05\textwidth][c]{24}& \makebox[0.03\textwidth][c]{25}& \makebox[0.03\textwidth][c]{26}& \makebox[0.03\textwidth][c]{27}& \makebox[0.03\textwidth][c]{28}& \makebox[0.03\textwidth][c]{29}& \makebox[0.03\textwidth][c]{30}\\
    \hline\hline
29   & 6  & 5& 4  & 4  & 4  & 3&     2    & 2  & 2  & 2  & 2  &  1 & \\

30   & 6  & 5-6& 5  & 4  & 4  & 4 &     3     & 2  & 2  & 2  & 2  &  1 & 1\\

31   & 6& 6& 6& ${\bf5^\diamond}$ & 4  & 4  &  4    &   3   &2&2 & 2& 2 & 2 \\

32   & 6& 6& 6&  ${\bf5^\diamond}$-6& 5& 4 &  4     & 3-4&3&2 & 2& 2 & 2\\

33   & ${\bf 7^*}$& 6& 6&  ${\bf6^\diamond}$&  6&  5 & 4  & 4&4& 3& 2& 2 & 2\\

34   & ${\bf 8^*}$& 6-7& 6& ${\bf6^\diamond}$&  6&  5-6 &  4  & 4&4& 3-4& 3&2 & 2\\

35   & ${\bf 8^*}$& ${\bf 7^*}$-8& 6-7& ${\bf 6^*}$&  6& ${\bf 6^\diamond} $&   5  & 4&4& 4& 4& 3 & 2\\

36   &  ${\bf8^*}$& ${\bf7^\diamond}$-8& ${\bf 7^*}$-8&  ${\bf6^\diamond}$-7&  6&  ${\bf6^\diamond}$ &  6  &  5&4& 4& 4& 3-4 & 3\\

37   &  ${\bf8^\diamond}$-9& ${\bf 8^*}$& ${\bf 8^*}$& ${\bf 7^*}$-8& 6-7&  ${\bf6^\diamond}$ & 6&  5-6& 5& 4&4&4 & 4\\

38   & 9-10&  ${\bf8^\diamond}$-9& $ {\bf8^\diamond}$& $ {\bf7^\diamond}$-8& ${\bf7^*}$-8& ${\bf6^\diamond}$-7 &   6& 6& 6&  5&4&4 & 4\\

39   & 10&  9-10& ${\bf8^\diamond}$-9& ${\bf 8^*}$& ${\bf8^*}$& ${\bf7^*}$-8 &   6-7&  6&6&  5-6& 5&4 & 4\\

40   & 10-11& 9-10&  9-10& ${\bf 8^*}$-9&${\bf8^\diamond}$ & ${\bf7^\diamond}$-8 &  6-8&  6-7 & 6 &  6& 6&5 & 4\\
\hline
\end{tabular}
\end{center}

}

\section{New ternary LCD codes}

The constructions of optimal ternary LCD $[n,k]$ codes were studies
in \cite{AH-TLCD-1-10,ter-11-19,234-lcD}. Hence, we will construct some unknown ternary LCD codes and extend a result of Araya, Harada and Saito to length 25 by Constructions I and II in Section 3.

\subsection{Some important inequalities}
Let $d^E_3(n,k)$ denote the largest minimum weight among all ternary Euclidean LCD $[n, k]$ codes. By adding zero column, we have $d^E_3 (n+1, k)\geq d^E_3 (n, k)$.

\begin{lem}{\rm\cite{34-LCD}}\label{lemma-leq-ternary}
Suppose that $2\leq k\leq n$. Then
$d^E_3(n,k)\leq d^E_3(n,k-1).$
\end{lem}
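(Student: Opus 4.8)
The plan is to pass from an optimal ternary LCD $[n,k]$ code down to a ternary LCD $[n,k-1]$ code of no smaller minimum weight, simply by discarding one generator in a controlled way. Concretely, I would set $d=d^E_3(n,k)$ and fix a ternary LCD $[n,k,d]$ code $C$ attaining this value. The key input is the theorem recalled just above (from \cite[Proposition 4]{34-LCD}): every ternary Euclidean LCD $[n,k]$ code is obtained from some ternary LCD $[n,k-1]$ code $C_0$ by the construction of Theorem \ref{Methods-2}. In that construction $C$ has a generator matrix of the form $\begin{pmatrix}\mathbf y\\ G_0\end{pmatrix}$, where $G_0$ generates $C_0$ and $\mathbf y\in C_0^\perp$; in particular the rows of $G_0$ are codewords of $C$, so $C_0$ is a $(k-1)$-dimensional LCD subcode of $C$.

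The minimum-weight comparison is then immediate. Every nonzero codeword of $C_0$ is a nonzero codeword of $C$, hence has Hamming weight at least $d$. Writing $d_0$ for the minimum weight of $C_0$, this gives $d_0\ge d$, and therefore
\[
d^E_3(n,k-1)\ \ge\ d_0\ \ge\ d\ =\ d^E_3(n,k),
\]
which is exactly the claimed inequality. The hypothesis $2\le k\le n$ serves only to guarantee that $C_0$ has positive dimension $k-1\ge 1$, so that it is a genuine code.

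I expect the sole piece of real content to be the existence of the LCD subcode $C_0$, i.e. the guarantee that one can drop down one dimension without destroying the LCD property; the weight bookkeeping above is routine. That existence is precisely \cite[Proposition 4]{34-LCD}, and the underlying reason is clean over $\F_3$: since $C$ is LCD, the Euclidean form is non-degenerate on $C$, and a non-degenerate symmetric form over $\F_3$ cannot be totally isotropic. Indeed, if $\langle c,c\rangle_E=0$ held for every $c\in C$, then $2\langle u,v\rangle_E=\langle u+v,u+v\rangle_E-\langle u,u\rangle_E-\langle v,v\rangle_E=0$ for all $u,v$, forcing the form to vanish identically (as $2$ is invertible in $\F_3$) and contradicting non-degeneracy. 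Hence some codeword $c$ has $\langle c,c\rangle_E\ne 0$, and the orthogonal complement of $c$ inside $C$ is a non-degenerate, hence LCD, subcode of dimension $k-1$; this is the $C_0$ we need. Were one instead to bypass the cited proposition, this anisotropic-vector argument is the step I would single out as the crux.
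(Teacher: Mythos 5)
Your proof is correct and takes essentially the same route the paper relies on: the lemma is quoted there from \cite{34-LCD}, and the mechanism behind it --- the existence of an LCD $[n,k-1]$ subcode inside any ternary LCD $[n,k]$ code (\cite[Proposition 4]{34-LCD}, which the paper also invokes for its theorem that every ternary LCD code arises via Theorem \ref{Methods-2}) followed by the trivial observation that a subcode's minimum weight is no smaller --- is exactly what you use. Your closing anisotropic-vector argument (valid because $2$ is invertible in $\F_3$, which is precisely why the analogous statement is harder in the binary case) is the standard proof of that subcode existence, so nothing is missing.
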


\begin{lem}{\rm\cite{ter-11-19}}
If $20\leq n\leq 25$, then we have
$$d_3^E(n,n-2)=2,\ d_3^E(n,n-3)=2,\ d_3^E(n,n-4)=3.$$
\end{lem}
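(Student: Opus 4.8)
The plan is to prove each equality by matching an upper and a lower bound, treating the upper bounds first because they hold for every ternary $[n,k]$ code and hence \emph{a fortiori} for LCD ones. The tool is the standard fact that, for a code $C$ with parity-check matrix $H$ (equivalently, a generator matrix of $C^{\perp}$), $d(C)$ equals the smallest number of linearly dependent columns of $H$. When $k=n-2$, $H$ has two rows, so its columns lie in $\F_3^{2}$, which contains only $(3^{2}-1)/(3-1)=4$ one-dimensional subspaces; since $n\geq 20$, two columns fall in the same subspace and yield a codeword of weight at most $2$, so $d_3^E(n,n-2)\leq 2$. When $k=n-3$ the columns lie in $\F_3^{3}$, which has $(3^{3}-1)/2=13$ directions, and $n\geq 20>13$ again forces a dependency of size $2$, so $d_3^E(n,n-3)\leq 2$. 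When $k=n-4$, a code with $d\geq 4$ would need every three columns of $H$ to be linearly independent, i.e.\ the columns would form a set of $n$ distinct points of $\mathrm{PG}(3,3)$ no three of which are collinear (a cap); but the largest cap in $\mathrm{PG}(3,3)$ has $3^{2}+1=10$ points, and $n\geq 20>10$, whence $d_3^E(n,n-4)\leq 3$.

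For the lower bounds I would use the characterization that $C$ is LCD if and only if $C^{\perp}$ is LCD (immediate from $C\cap C^{\perp}=C^{\perp}\cap C$) together with Massey's criterion. First, Lemma \ref{lemma-leq-ternary} gives $d_3^E(n,n-3)\geq d_3^E(n,n-2)$, so it is enough to build an LCD $[n,n-2,2]$ code and an LCD $[n,n-4,3]$ code. For the former, take $C=D^{\perp}$ where $D$ is an $[n,2]$ code whose $2\times n$ generator matrix $M$ has all columns nonzero: then $d(C)\geq 2$, matching the upper bound to force $d(C)=2$, and $C$ is LCD exactly when $MM^{T}$ is nonsingular. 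Since $MM^{T}=\sum_i p_i p_i^{T}$ and each summand is unchanged under scaling $p_i$ by $\pm1$ (as $(\pm1)^{2}=1$ in $\F_3$), one only has to fix the multiplicities of the four directions so that the resulting $2\times 2$ determinant is nonzero, an elementary residue computation that succeeds for every $n$ in the range. For the latter, take $C=D^{\perp}$ with $D$ an $[n,4]$ code whose generator columns are $n$ distinct points of $\mathrm{PG}(3,3)$, which is possible because $n\leq 25\leq 40$; then $d(C)\geq 3$, hence $d(C)=3$ by the cap bound, and it remains to choose the points so that $MM^{T}$ is nonsingular.

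The heart of the argument, and the only step that is not a one-line verification, is this last nonsingularity requirement. Because scaling a column by $\pm1$ leaves $MM^{T}=\sum_i p_i p_i^{T}$ invariant over $\F_3$, the LCD condition depends only on the chosen multiset of projective points and not on their representatives, so a singular Gram matrix cannot be repaired by sign changes and the multiset itself must be chosen with care. In the $k=n-2$ case this is a trivial condition on four nonnegative multiplicities summing to $n$; in the $k=n-4$ case one must check, for each of the six values $n\in\{20,\dots,25\}$, that some choice of $n$ distinct points of $\mathrm{PG}(3,3)$ yields a nonsingular $4\times 4$ matrix $MM^{T}$. The freedom in selecting $n$ of the $40$ points makes this routine, and it is confirmed at once by exhibiting an explicit generator matrix or by a short computer check. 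A cleaner alternative for propagating the lower bounds across the whole range would be to fix one base code and repeatedly apply Theorem \ref{Methods}, which raises $n$ and $k$ together and so preserves the codimension, while verifying that the required distance bound is retained at each step.
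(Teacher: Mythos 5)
The paper contains no proof of this lemma at all: it is quoted verbatim from \cite{ter-11-19}. So your argument cannot be matched against an internal one; what you have written is a self-contained proof, which is more than the paper attempts, and its skeleton is correct. The upper bounds follow exactly as you say and hold for \emph{every} ternary code, LCD or not: a $2\times n$ (resp.\ $3\times n$) parity-check matrix with $n\geq 20$ must have two proportional columns because $\mathrm{PG}(1,3)$ (resp.\ $\mathrm{PG}(2,3)$) has only $4$ (resp.\ $13$) points, and a $4\times n$ parity-check matrix cannot have every triple of columns independent because a cap in $\mathrm{PG}(3,3)$ has at most $3^2+1=10$ points. The reduction of the lower bounds is also sound: $C=D^{\perp}$ is LCD iff $D$ is LCD iff $MM^{T}=\sum_i p_ip_i^{T}$ is nonsingular; this matrix depends only on the projective points chosen since $(\pm1)^2=1$ in $\F_3$; the multiplicity computation for $k=n-2$ genuinely succeeds (e.g.\ put multiplicities $a$ and $n-a$ on $(1,0)$ and $(0,1)$ with $a\in\{1,2\}$ chosen so that $a(n-a)\not\equiv 0\pmod 3$); and Lemma \ref{lemma-leq-ternary} then settles $k=n-3$ from $k=n-2$.

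The soft spot is the $k=n-4$ lower bound, where you assert, rather than exhibit, a set of $n$ distinct points of $\mathrm{PG}(3,3)$ with nonsingular Gram matrix for each $n\in\{20,\ldots,25\}$. This is where all the remaining content sits, and it is less automatic than ``the freedom in selecting $n$ of the $40$ points makes this routine'' suggests. For instance, the $20$ points $e_i$, $e_i\pm e_j$ $(i<j)$, $(1,1,1,1)$, $(1,1,2,2)$, $(1,2,1,2)$, $(1,2,2,1)$ have Gram matrix $2I_4$, and appending $(1,1,1,0)$ handles $n=21$; but then appending any one of $(1,1,0,1)$, $(1,0,1,1)$, $(0,1,1,1)$, $(1,1,2,0)$, $(1,1,1,2)$ for $n=22$ yields a \emph{singular} Gram matrix, whereas $(1,2,1,1)$ works. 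So the six finite verifications cannot be waved off; they must actually be displayed (the identity $\det(G+vv^{T})=\det(G)\,(1+v^{T}G^{-1}v)$ makes each incremental check a one-line computation) or explicitly delegated to MAGMA, which would be in the spirit of the rest of the paper. With those checks written down your proof is complete; without them it is a correct plan whose key numerical fact is left unverified.
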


According to \cite{AH-TLCD-1-10}, we know that $d_3^E(n,n-1)=1$ if $n\equiv 0\ ({\rm mod}\ 3)$ and $d_3^E(n,n-1)=2$ if $n\equiv 1,2\ ({\rm mod}\ 3)$.
The following proposition is a generalization of \cite[Proposition 5]{B-LCD-40}.

\begin{prop}\label{prop-ternary}
Let $C$ be a ternary linear $[n,k,d]$ code and $\dim(C\cap C^\perp)=s$. If $C_i$ is the shortened code of $C$ in $i$-th coordinate for some $1\in \{1,\ldots,n\}$, then $dim(C_i\cap C_i^\perp)\leq s+1$.
\end{prop}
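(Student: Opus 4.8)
The plan is to compare the shortened code $C_i$ directly with the parent code $C$ by controlling how the hull can grow under shortening. Let me set up notation: fix the coordinate $i$, and let $C(i)$ denote the subcode of codewords of $C$ that vanish in position $i$, so that $C_i$ is the image of $C(i)$ under deleting coordinate $i$ (puncturing on $\{i\}$). I would first record the dual side via Lemma \ref{lem-shorten-puncture}(i), namely $(C_i)^\perp = (C^\perp)^{\{i\}}$, the punctured dual. Thus a codeword of $C_i \cap C_i^\perp$ lifts (by reinserting a $0$ in position $i$) to a codeword of $C(i)$ that is orthogonal to every element of $C(i)$, and this lift lies in $C$, while its orthogonality data is inherited from the puncturing of $C^\perp$. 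The key point is to relate orthogonality in the shortened length $n-1$ to orthogonality in length $n$: deleting a single coordinate that is already $0$ does not change any Euclidean inner product, so $\langle \mathbf{u},\mathbf{v}\rangle = 0$ in $\F_3^{n-1}$ is equivalent to the same for the zero-extended vectors in $\F_3^n$.

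The central step is to exhibit a linear map from $\mathrm{Hull}(C_i)$ into a space built from $\mathrm{Hull}(C)$ whose kernel has dimension at most $1$, which forces $\dim \mathrm{Hull}(C_i) \le s+1$. Concretely, I would take the zero-extension map $\iota$ sending a hull element $\mathbf{w}\in C_i\cap C_i^\perp$ to its length-$n$ lift $\tilde{\mathbf{w}}\in C(i)\subseteq C$. This $\tilde{\mathbf{w}}$ is orthogonal to every zero-extended element of $C(i)$; the only possible failure of $\tilde{\mathbf{w}}\in C^\perp$ is orthogonality against codewords of $C$ that are nonzero in position $i$. Fixing one such codeword $\mathbf{g}\in C\setminus C(i)$ (if $C(i)=C$ the statement is trivial since then nothing is deleted from the support structure and $\mathrm{Hull}(C_i)$ injects into $\mathrm{Hull}(C)$), every element of $C$ is a combination of $\mathbf{g}$ and elements of $C(i)$, so whether $\tilde{\mathbf{w}}\in C^\perp$ is governed by the single scalar $\langle \tilde{\mathbf{w}},\mathbf{g}\rangle\in\F_3$. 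This yields a linear functional $\phi:\mathrm{Hull}(C_i)\to\F_3$, $\phi(\mathbf{w})=\langle\tilde{\mathbf{w}},\mathbf{g}\rangle$, and on its kernel the lift $\tilde{\mathbf{w}}$ lands in $C\cap C^\perp=\mathrm{Hull}(C)$.

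Finishing the count: the restriction $\iota|_{\ker\phi}$ is injective (zero-extension never collapses) and lands in $\mathrm{Hull}(C)$, so $\dim\ker\phi\le s$; since $\phi$ has image in the one-dimensional space $\F_3$, $\dim\mathrm{Hull}(C_i)\le \dim\ker\phi + 1\le s+1$, which is exactly the claim. The main obstacle I anticipate is the bookkeeping at the interface between the two lengths: one must be careful that the image of $\iota|_{\ker\phi}$ really lands in $\mathrm{Hull}(C)$ rather than merely in $C$, i.e. that killing the single functional $\phi$ genuinely upgrades ``orthogonal to all of $C(i)$'' into ``orthogonal to all of $C$.'' This hinges on the decomposition $C=\langle\mathbf{g}\rangle + C(i)$ together with the zero-coordinate invariance of the inner product, and it is where the characteristic-$3$ / Euclidean-specific nature of the argument actually enters only through the trivial remark that $\F_3$ is one-dimensional over itself, so the bound $s+1$ (and not something larger) is forced by the codimension-one nature of $C(i)$ in $C$.
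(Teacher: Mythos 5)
Your proof is correct, and it establishes the same codimension-one counting as the paper, but via a different mechanism. The paper argues on the dual side (taking $i=1$ without loss of generality): it decomposes the hull $\mathcal{H}=C\cap C^\perp$ according to the value of the deleted coordinate, sets $\mathcal{H}'=\{y : (0,y)\in\mathcal{H}\}$, shows $\mathcal{H}'\subseteq \mathcal{H}_1=C_1\cap C_1^\perp$, and bounds $\dim(\mathcal{H}_1)\le\dim(\mathcal{H}')+1$ by an elimination argument: any $y\in\mathcal{H}_1\setminus\mathcal{H}'$ lifts to $(\lambda,y)\in C^\perp$ with $\lambda\neq 0$, and for two such elements $y_1,y_2$ the combination $\lambda_2y_1-\lambda_1y_2$ lands back in $\mathcal{H}'$. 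You argue on the primal side: orthogonality against a fixed codeword $\mathbf{g}\in C$ whose $i$-th coordinate is nonzero defines a linear functional $\phi$ on $\mathrm{Hull}(C_i)$, the decomposition $C=\langle\mathbf{g}\rangle + C(i)$ shows that $\ker\phi$ zero-extends injectively into $\mathrm{Hull}(C)$, and rank--nullity gives $\dim\mathrm{Hull}(C_i)\le s+1$. In fact your $\ker\phi$ coincides exactly with the paper's $\mathcal{H}'$, so the two proofs count the same subspace; what your packaging buys is that it avoids both the case split ($\mathcal{H}'=\mathcal{H}_1$ versus $\mathcal{H}'\subsetneq\mathcal{H}_1$) and the two-element elimination (which is, in effect, a hand-made verification that $y\mapsto\lambda$ is linear modulo $\mathcal{H}'$), and it is manifestly field-independent: it works verbatim over any $\F_q$ and for the Hermitian inner product, whereas the paper's coset bookkeeping $(0|\mathcal{H}')\cup(1|v+\mathcal{H}')\cup(2|2v+\mathcal{H}')$ is written in $\F_3$-specific form. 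What the paper's version gives in exchange is slightly finer information: its case analysis locates $\dim\mathcal{H}_1$ in $\{s-1,s,s+1\}$, though only the upper bound $s+1$ is needed for Corollary \ref{3-cor-leq}. Your handling of the degenerate case $C(i)=C$, where $\mathrm{Hull}(C_i)$ injects into $\mathrm{Hull}(C)$, is also sound and matches the paper's first case.
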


\begin{proof}
Without loss of generality, we can consider the shortened code $C_1$ of $C$ on the first coordinate. If all codewords have $0$ as a first coordinate, then $\dim(C_1\cap C_1)=\dim(C\cap C^\perp)=s$.
Otherwise $C=(0|C_1)\cup(1|u+C_1)\cup(2|2u+C)$ for a codeword $(1,u)\in C$.
Let $\mathcal{H}=C\cap C^\perp$ and $\mathcal{H}_1=C_1\cap C_1^\perp$.
There are two possibilities for $\mathcal{H}$, namely $\mathcal{H}=(0|\mathcal{H}')$ or $\mathcal{H}=(0|\mathcal{H}')\cup(1|v+\mathcal{H}')\cup(2|2v+\mathcal{H}')$.
In both cases, $\mathcal{H}'\subseteq\mathcal{H}_1$.

(i) If $\mathcal{H}'=\mathcal{H}_1$, then $\dim(\mathcal{H}_1)=\dim(\mathcal{H})$ or $\dim(\mathcal{H})-1$.

 (ii) If $\mathcal{H}'\subsetneqq\mathcal{H}_1$. Take $y_1,y_2\in \mathcal{H}_1\backslash \mathcal{H}'\subseteq C_1\cap C_1^\perp$.
      It follows from $y_i\in C_1$ that $(0,y_i)\in C$.
      If follows from $y_i\in C_1^\perp=(C^\perp)^1$ that $(\lambda_i,y_i)\in C^\perp$ for some $\lambda_i\in \F_3$.
      If $\lambda_i=0$, then $(0,y_i)\in C\cap C^\perp=\mathcal{H}$. Implying that $y_i\in \mathcal{H}'$, which is a contradiction. Hence $\lambda_i\neq 0$. So $(0,\lambda_2y_1-\lambda_1y_2)\in \mathcal{H}$, which implies that $\lambda_2y_1-\lambda_1y_2\in \mathcal{H}'$. It turns out that $\mathcal{H}_1=\mathcal{H}'\cup(y_1+\mathcal{H}')\cup(2y_1+\mathcal{H}')$ and $\dim(\mathcal{H}_1)=\dim(\mathcal{H}')+1$.
      Since $\dim(\mathcal{H}')=\dim(\mathcal{H})$ or $\dim(\mathcal{H})-1$, we have $\dim(\mathcal{H}_1)=\dim(\mathcal{H})$ or $\dim(\mathcal{H})+1$.
 This completes the proof.
\end{proof}

\begin{cor}\label{3-cor-leq}
Let $k$ and $n$ be two integers such that $2\leq k\leq n$. Then we have
$$d^E_3(n,k)\leq \max\{d^E_3(n-1,k-1),d^E_3(n-2,k-2)\}.$$
\end{cor}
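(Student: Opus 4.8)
The plan is to mirror the proof of Corollary \ref{cor-4.4} for the binary case, simply replacing the binary hull-dimension bound by the ternary one just established in Proposition \ref{prop-ternary}. First I would take a ternary LCD $[n,k,d]$ code $C$ attaining $d=d^E_3(n,k)$; since $C$ is LCD we have $\dim(C\cap C^\perp)=0$. I would then shorten $C$ on a single coordinate $i$. Choosing $i$ so that the $i$-th column of a generator matrix of $C$ is nonzero (such a column exists because the matrix has rank $k\ge 1$) guarantees that the shortened code $C_i$ has length $n-1$ and dimension $k-1$. Moreover, every codeword of $C_i$ comes from a codeword of $C$ that vanishes on coordinate $i$, so deleting that zero entry cannot lower any nonzero weight; hence $C_i$ is an $[n-1,k-1,\ge d]$ code.

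Next I would invoke Proposition \ref{prop-ternary} with $s=0$ to conclude $\dim(\mathrm{Hull}(C_i))=\dim(C_i\cap C_i^\perp)\le 1$, which leaves exactly two cases. If $\dim(\mathrm{Hull}(C_i))=0$, then $C_i$ is itself a ternary LCD $[n-1,k-1,\ge d]$ code, so $d^E_3(n-1,k-1)\ge d$. If $\dim(\mathrm{Hull}(C_i))=1$, then $C_i$ is an $[n-1,k-1,\ge d]$ code with one-dimensional hull, and Theorem \ref{LCD-(n-l,k-l)} with $l=1$ produces a ternary LCD $[n-2,k-2,\ge d]$ code (obtained by shortening $C_i$ on an information set of its hull), whence $d^E_3(n-2,k-2)\ge d$. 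Combining the two cases yields $d=d^E_3(n,k)\le\max\{d^E_3(n-1,k-1),d^E_3(n-2,k-2)\}$, which is the claimed inequality.

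I expect the only genuine content to be Proposition \ref{prop-ternary}; everything else is the bookkeeping of shortening. The one point that needs care is verifying that the shortening step actually lands in dimension $k-1$ with minimum distance at least $d$ — this is precisely why I would fix the coordinate $i$ to have a nonzero column rather than an arbitrary coordinate — together with handling the degenerate range (for example $k=2$, where $d^E_3(n-2,k-2)$ refers to a trivial code and the bound holds under the usual convention). These are exactly the routine details that the parallel binary argument in Corollary \ref{cor-4.4} also passes over quickly.
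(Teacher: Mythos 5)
Your proof is correct and follows essentially the same route as the paper's: shorten on one coordinate, bound the hull dimension of the shortened code by $1$ via Proposition \ref{prop-ternary}, and split into the two cases (hull dimension $0$ gives an LCD $[n-1,k-1,\geq d]$ code, while hull dimension $1$ combined with Theorem \ref{LCD-(n-l,k-l)} gives an LCD $[n-2,k-2,\geq d]$ code). Your extra care in choosing a coordinate whose column is nonzero (so the dimension actually drops to $k-1$) and in flagging the degenerate $k=2$ case are details the paper's proof passes over silently, but they do not constitute a different argument.
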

\begin{proof}
The proof is similar to that of Corollary \ref{cor-4.4}. The main difference is that we use Proposition \ref{prop-ternary} instead of \cite[Proposition 5]{B-LCD-40}.
\end{proof}

\subsection{New ternary LCD codes}

Firstly, we give some known ternary LCD codes from \cite{ter-11-19}.
\begin{center}
$13\leq d_3(23,4)\leq 14$, $d_3(24,4)=15$, $15\leq d_3(25,4)\leq 16$.
\end{center}

According to \cite{ter-11-19}, there are ternary LCD codes $C_{[19,6,9]_3}$, $C_{[20,5,11]_3}$, $C_{[20,6,10]_3}$ and $C_{[20,8,8]_3}$, they have the following generator matrices respectively.
\begin{center}
{\small
$
\begin{pmatrix}
\setlength{\arraycolsep}{1.2pt}
\begin{array}{ccccccccccccccccccc}
1&0&0&0&0&0&0&0&0&0&0&1&1&1&1&1&1&1&1\\
0&1&0&0&0&0&0&0&0&1&1&2&2&2&1&1&1&0&0\\
0&0&1&0&0&0&1&1&1&2&1&1&1&1&0&0&0&0&0\\
0&0&0&1&0&0&1&2&2&2&1&2&0&0&1&1&0&0&0\\
0&0&0&0&1&0&1&2&0&1&2&1&1&0&1&0&0&1&0\\
0&0&0&0&0&1&1&1&2&1&0&1&0&1&2&0&1&1&0
\end{array}
\end{pmatrix}
$,\
$
\begin{pmatrix}
\setlength{\arraycolsep}{1.2pt}
\begin{array}{cccccccccccccccccccc}
1&0&0&0&0&0&0&0&0&0&1&1&1&1&1&1&1&1&1&1\\
0&1&0&0&0&0&0&1&1&1&0&0&0&1&1&1&1&2&2&2\\
0&0&1&0&0&0&1&2&2&1&2&2&1&0&2&1&1&2&2&1\\
0&0&0&1&0&1&1&2&1&1&0&2&2&2&2&1&0&0&0&0\\
0&0&0&0&1&1&2&0&1&2&2&0&2&1&2&1&2&1&0&1
\end{array}
\end{pmatrix}
$,}
\end{center}
\begin{center}
{\small$
\begin{pmatrix}
\setlength{\arraycolsep}{1.2pt}
\begin{array}{cccccccccccccccccccc}
1&0&0&0&0&0&0&0&0&0&0&1&1&1&1&1&1&1&1&1\\
0&1&0&0&0&0&0&0&1&1&1&2&2&1&1&1&1&0&0&0\\
0&0&1&0&0&0&1&1&2&1&1&1&1&1&0&0&0&1&0&0\\
0&0&0&1&0&0&1&2&0&2&1&1&1&0&1&1&0&0&1&0\\
0&0&0&0&1&0&0&1&1&1&2&1&0&0&2&0&1&1&1&0\\
0&0&0&0&0&1&1&2&2&0&0&1&2&2&1&2&0&0&0&1
\end{array}
\end{pmatrix},\
\begin{pmatrix}
\setlength{\arraycolsep}{1.2pt}
\begin{array}{cccccccccccccccccccc}
1&0&0&0&0&0&0&0&0&0&0&0&0&1&1&1&1&1&1&1\\
0&1&0&0&0&0&0&0&0&0&1&1&1&1&1&1&1&0&0&0\\
0&0&1&0&0&0&0&0&1&1&1&2&2&1&1&0&0&0&0&0\\
0&0&0&1&0&0&0&0&1&2&2&1&0&1&0&2&1&0&0&0\\
0&0&0&0&1&0&0&0&1&2&2&2&1&0&0&1&0&1&0&0\\
0&0&0&0&0&1&0&0&1&1&2&0&2&2&0&2&0&0&2&0\\
0&0&0&0&0&0&1&0&1&2&0&1&2&2&0&0&0&1&1&0\\
0&0&0&0&0&0&0&1&0&2&0&2&0&2&2&2&1&1&1&0
\end{array}
\end{pmatrix}.
$
}
\end{center}

\begin{prop}
There are ternary LCD $[20,7,9]$ and $[20,12,6]$ codes.
\end{prop}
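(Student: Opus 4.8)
The plan is to construct both codes from the explicitly displayed ternary LCD codes $C_{[19,6,9]_3}$, $C_{[20,6,10]_3}$ and $C_{[20,8,8]_3}$ by means of the two extension methods of Theorems \ref{Methods} and \ref{Methods-2}, and to certify the resulting parameters with MAGMA. Because both theorems guarantee that their output is again an LCD code, after each extension the LCD property is automatic and the only quantity that must be checked is the minimum weight.

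For the $[20,7,9]$ code I would start from $C_{[19,6,9]_3}$ and apply Theorem \ref{Methods}(2), adjoining a row $(1|{\bf x})$ with ${\bf x}\in (C_{[19,6,9]_3})^{\perp}$ and $wt({\bf x})\not\equiv 2\ (\mathrm{mod}\ 3)$. The embedded subcode $({\bf 0}|C_{[19,6,9]_3})$ already has minimum weight $9$, and the only additional codewords lie in the two nonzero cosets $\lambda{\bf x}+C_{[19,6,9]_3}$ ($\lambda\in\{1,2\}$), whose weights equal $1+wt(\lambda{\bf x}+c)$ for $c\in C_{[19,6,9]_3}$. Hence the new code has minimum weight $\min\{9,\,1+w\}$, where $w$ is the smallest weight occurring in those two cosets, and it suffices to pick ${\bf x}$ with $w\geq 8$; then Theorem \ref{Methods}(2) makes the $[20,7]$ code LCD and MAGMA confirms $d=9$. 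Equivalently, one may apply Theorem \ref{Methods-2}(2) to $C_{[20,6,10]_3}$ with a suitable dual vector of weight $\not\equiv 0\ (\mathrm{mod}\ 3)$.

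For the $[20,12,6]$ code the idea is to raise the dimension four times while keeping the length fixed, starting from $C_{[20,8,8]_3}$ and applying Theorem \ref{Methods-2}(2) repeatedly along the chain $[20,8]\to[20,9]\to[20,10]\to[20,11]\to[20,12]$. At the $i$-th step I adjoin a vector ${\bf y}_i$ in the dual of the current code with $wt({\bf y}_i)\not\equiv 0\ (\mathrm{mod}\ 3)$, which keeps the code LCD by Theorem \ref{Methods-2}(2). By Lemma \ref{lemma-leq-ternary} the minimum weight cannot increase along this chain, so I would choose each ${\bf y}_i$ so that the minimum weight decreases as slowly as possible from $8$ and lands at exactly $6$ at the last step, verifying each intermediate minimum weight in MAGMA.

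The main obstacle is existential and computational rather than structural: at every extension one must exhibit a dual vector of the prescribed weight residue for which the minimum distance does not drop below the target value. This is not guaranteed a priori, but it is settled by a short search over the (low-dimensional) dual codes, after which MAGMA certifies the full weight enumerator. Beyond Theorems \ref{Methods} and \ref{Methods-2} and the monotonicity Lemma \ref{lemma-leq-ternary}, no further theoretical input is required.
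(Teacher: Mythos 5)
Your construction of the $[20,7,9]$ code coincides with the paper's: the paper applies Theorem \ref{Methods} to the displayed generator matrix of $C_{[19,6,9]_3}$ with the explicit dual vector ${\bf x}=(1102001100000110222)$, and your coset-weight analysis (minimum weight $=\min\{9,1+w\}$ over the cosets $\lambda{\bf x}+C$) is a correct elaboration of what the paper delegates to MAGMA. For the $[20,12,6]$ code, however, the paper takes a much shorter route than you do: the best-known ternary $[20,12,6]$ code produced by the MAGMA function BKLC already has zero-dimensional hull (this is the entry $C^{0}_{[20,12,6]_3}$ in Table 6, with empty shortening set), so it is LCD as it stands and no extension is needed. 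Your alternative, climbing $[20,8]\to[20,9]\to[20,10]\to[20,11]\to[20,12]$ by four applications of Theorem \ref{Methods-2}, is structurally sound as far as the LCD property goes: at each step the dual of a nonzero ternary LCD code cannot be self-orthogonal, hence cannot have all weights divisible by $3$, so an admissible ${\bf y}_i$ with $wt({\bf y}_i)\not\equiv 0\ (\mathrm{mod}\ 3)$ always exists. The weak point is the distance claim. There is no guarantee that the \emph{particular} code $C_{[20,8,8]_3}$ displayed in the paper sits at the bottom of a chain terminating in minimum distance $6$: the subcode theorem following Theorem \ref{up-2} only guarantees that every LCD $[20,12,6]$ code contains \emph{some} LCD $[20,8,\geq 6]$ subcode, not that your chosen starting code is one of them, so your search could fail at any of the four levels and force a restart from a different base code. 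Your plan is therefore a legitimate but riskier and computationally heavier route; the paper's single hull computation on the BKLC code removes that uncertainty entirely. (A minor point: the monotonicity you attribute to Lemma \ref{lemma-leq-ternary} is really just the trivial fact that a code's minimum distance is at most that of any of its subcodes; the lemma itself concerns the optimal values $d_3^E(n,k)$.)
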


\begin{proof}
We start from the ternary LCD code $C_{[19,6,9]_3}$. By applying Theorem \ref{Methods}, we can construct a ternary LCD $[20,7,9]$ code $C_{[20,7,9]_3}$ with the generator matrix
{\small$$
G_{[20,7,9]_3}=\begin{pmatrix}
\begin{array}{c|ccccccccccccccccccc}
1&1&1&0&2&0&0&1&1&0&0&0&0&0&1&1&0&2&2&2\\
\hline
0&1&0&0&0&0&0&0&0&0&0&0&1&1&1&1&1&1&1&1\\
0&0&1&0&0&0&0&0&0&0&1&1&2&2&2&1&1&1&0&0\\
0&0&0&1&0&0&0&1&1&1&2&1&1&1&1&0&0&0&0&0\\
0&0&0&0&1&0&0&1&2&2&2&1&2&0&0&1&1&0&0&0\\
0&0&0&0&0&1&0&1&2&0&1&2&1&1&0&1&0&0&1&0\\
0&0&0&0&0&0&1&1&1&2&1&0&1&0&1&2&0&1&1&0
\end{array}
\end{pmatrix}.
$$}
By the MAGMA function BKLC, one can construct a ternary LCD $[20,12,6]$ code.
\end{proof}

\begin{remark}
We list in Tables 6-7 some ternary LCD codes obtained by Theorems \ref{LCD-(n-l,k-l)}, \ref{puncture}, \ref{Methods} and \ref{Methods-2}. To save the space, the codes in Table 6 can be obtained from {\tt https://ahu-coding.github.io/code1/}.
\end{remark}

{\small
\begin{center}
 \begin{threeparttable}
\begin{tabular}{llll}
\multicolumn{4}{c}{{\rm Table 6: The ternary LCD codes from Theorems \ref{LCD-(n-l,k-l)} and \ref{puncture}}}\\
\hline
    \makebox[0.25\textwidth][l]{{\rm The\ code}\ $C_{[n,k,d]_3}^{\ell}$}& \makebox[0.25\textwidth][l]{The set $T$} &
    \makebox[0.2\textwidth][l]{$C_T$} &
    \makebox[0.15\textwidth][l]{References}\\
    \hline\hline
$C^{0}_{[20,12,6]_3}$ &  $\backslash$& [20,12,6]& Theorem \ref{LCD-(n-l,k-l)}\\

$C^{0}_{[21,12,6]_3} $& $\backslash$ & [21,12,6]& Theorem \ref{LCD-(n-l,k-l)}\\

$C^{0}_{[21,15,4]_3}$ & $\backslash$ & [21,15,4]& Theorem \ref{LCD-(n-l,k-l)}\\

$C^{0}_{[21,17,3]_3}$ & $\backslash$ & [21,17,3]& Theorem \ref{LCD-(n-l,k-l)}\\

$C^{3}_{[25,9,11]_3} $& \{1,2,3\} & [22,6,11]& Theorem \ref{LCD-(n-l,k-l)}\\

$C^{3}_{[25,13,8]_3}$ & \{1,2,9\} & [22,10,8]& Theorem \ref{LCD-(n-l,k-l)}\\

$C^{0}_{[23,17,4]_3}$ &$ \backslash$& [23,17,4]& Theorem \ref{LCD-(n-l,k-l)}\\

$C^{0}_{[24,18,4]_3} $& $\backslash$ & [24,18,4]& Theorem \ref{LCD-(n-l,k-l)}\\

$C^{9}_{[44,29,8]_3} $& \{2,3,4,5,6,7,8,9,10\} & [35,20,8]& Theorem \ref{LCD-(n-l,k-l)}\\

$C^{1}_{[40,29,6]_3}$ & \{2\} & [39,28,6]& Theorem \ref{LCD-(n-l,k-l)}\\
\hline\hline
{\rm The\ code}\ $C_{[n,k,d]_3}^{\ell}$& The set $T$ &
    $C^T$ & References\\
    \hline\hline
$C^3_{[27,5,16]_3}$ & \{1,2,5\} & [24,5,13]& Theorem \ref{puncture}\\

$C^4_{[29,6,16]_3}$ & \{1,2,3,4\} & [25,6,13]& Theorem \ref{puncture}\\
\hline
\end{tabular}
\begin{tablenotes}
\footnotesize
\item Note that the code $C_{[n,k,d]_3}^{\ell}$ is the best-known ternary $[n,k,d]$ code from MAGMA \cite{magma}.
\end{tablenotes}
\end{threeparttable}
\end{center}
}

\begin{prop}
There are ternary LCD $[22,11,7]$ and $[24,16,5]$ codes.
\end{prop}

\begin{proof}
We start from the ternary LCD codes $C_{[22,10,8]_3}$ and $C_{[24,15,6]_3}$ (see Tables 6 and 7). By applying Theorem \ref{Methods-2}, we can construct ternary LCD codes $C_{[22,11,7]_3}$ and $C_{[24,16,5]_3}$ with the following generator matrices, respectively,
{\small$$
G_{[22,11,7]_3}=\begin{pmatrix}
\begin{array}{cccccccccccccccccccccc}
1101222110122211012200\\
\hline
1000010000002002210122\\
0100020000102002111010\\
0010010000101020000122\\
0001020000222012211022\\
0000110000110010202010\\
0000001000011020021021\\
0000000100221002122200\\
0000000010212022221211\\
0000000001101002102021\\
0000000000000111111110
\end{array}
\end{pmatrix},~
G_{[24,16,5]_3}=\begin{pmatrix}
\begin{array}{cccccccccccccccccccccccc}
001121202101001122102021\\
\hline
100000000000000120201111\\
010000000000000112011000\\
001000000000000112110022\\
000100000000000020110211\\
000010000000000020021220\\
000001000000000002002122\\
000000100000000012220010\\
000000010000000001222001\\
000000001000000021102102\\
000000000100000011100011\\
000000000010000022120200\\
000000000001000002212020\\
000000000000100000221202\\
000000000000010012012221\\
000000000000001022211121
\end{array}
\end{pmatrix}.
$$}
\end{proof}

{\small
\begin{center}
 \begin{threeparttable}
\begin{tabular}{llll}
\multicolumn{4}{c}{{\rm Table 7: The ternary LCD codes from Theorem \ref{Methods} and Theorem \ref{Methods-2}}}\\
\hline
    \makebox[0.18\textwidth][l]{{\rm The\ given}\ $C_{[n,k,d]_3}$}& \makebox[0.22\textwidth][l]{The vector ${\bf x}$ or ${\bf y}$} &
    \makebox[0.08\textwidth][l]{$C'$} &
    \makebox[0.1\textwidth][l]{References}\\
    \hline\hline
[19,6,9] &  (1102001100000110222) & [20,7,9] & {\rm Theorem}\ \ref{Methods} \\

[20,5,11] &  (21112201000000010021) & [21,6,10] & {\rm Theorem}\ \ref{Methods} \\

[20,6,10] & (02112000000000121221) & [21,7,9] & {\rm Theorem}\ \ref{Methods} \\

[20,8,8] &  (12021210000020212222) & [21,9,8] & {\rm Theorem}\ \ref{Methods} \\

[21,6,10] & (200002221000002020221) & [22,7,10] & {\rm Theorem}\ \ref{Methods} \\

[21,7,9] &  (222210000000002002222) & [22,8,9] & {\rm Theorem}\ \ref{Methods} \\

[21,12,6] &  (120000022222120122212) & [22,13,6] & {\rm Theorem}\ \ref{Methods} \\

[21,15,4] &  (010000111012101011211) & [22,16,4] & {\rm Theorem}\ \ref{Methods} \\

[21,17,3] &   (010021201100021201212) & [22,18,3] & {\rm Theorem}\ \ref{Methods} \\

[22,6,11] &   (1211021222210000112200) & [23,7,11] & {\rm Theorem}\ \ref{Methods} \\

[22,7,10] &   (1112202121010001202201) & [23,8,10] & {\rm Theorem}\ \ref{Methods} \\

[22,8,9] &   (2210022110000011120021) & [23,9,9] & {\rm Theorem}\ \ref{Methods} \\

[22,11,7] &   (0010002100121001211101) & [23,12,7] & {\rm Theorem}\ \ref{Methods} \\

[22,13,6] &    (2110210002112020111112) & [23,14,6] & {\rm Theorem}\ \ref{Methods} \\

[23,7,11] &    (10112100000000000102211) & [24,8,10] & {\rm Theorem}\ \ref{Methods} \\

[23,12,7] &     (00100211121000112121012) & [24,13,7] & {\rm Theorem}\ \ref{Methods} \\

[23,14,6] &     (22100010002211222202121) & [24,15,6] & {\rm Theorem}\ \ref{Methods} \\

[24,8,10] &     (020212110000000020010011) & [25,9,10] & {\rm Theorem}\ \ref{Methods} \\

[24,15,6] &    (102200010201120202000120) & [25,16,6] & {\rm Theorem}\ \ref{Methods} \\

[24,16,5] &    (010000010021012212210010) & [25,17,5] & {\rm Theorem}\ \ref{Methods} \\

[24,18,4] &    (121000002021110021222212) & [25,19,4] & {\rm Theorem}\ \ref{Methods} \\

[35,20,8] &       (12202110000000002101221001212020220) & [36,21,8] & {\rm Theorem}\ \ref{Methods} \\

[36,21,8] &        (100200212100000112200221222000012121) & [37,22,8] & {\rm Theorem}\ \ref{Methods} \\

[39,28,6] &         (021020210000200202112021120012101111112) & [40,29,6] & {\rm Theorem}\ \ref{Methods} \\
\hline
[22,10,8] &     (1101222110122211012200) & [22,11,7] & {\rm Theorem}\ \ref{Methods-2} \\

[22,13,6] &      (0020100000102202121012) & [22,14,5] & {\rm Theorem}\ \ref{Methods-2} \\

[23,14,6] &  (00020100000102202121012) & [23,15,5] & {\rm Theorem}\ \ref{Methods-2} \\

[24,15,6] &  (001121202101001122102021) & [24,16,5] & {\rm Theorem}\ \ref{Methods-2} \\

[25,9,10] &  (1112101001021000011020001) & [25,10,9] & {\rm Theorem}\ \ref{Methods-2}\\

[25,10,9] &  (1222121000002000212020021) & [25,11,8] & {\rm Theorem}\ \ref{Methods-2}\\
\hline
\end{tabular}
\begin{tablenotes}
\footnotesize
\item Note that the given code $C_{[n,k,d]_3}$ in Table 7 is from Tables 6-7.
\end{tablenotes}
\end{threeparttable}
\end{center}}

\begin{cor}
$d_3^E(21,9)=8,d_3^E(22,10)=8,d_3^E(23,11)\in \{7,8\},d_3^E(24,12)\in \{7,8\}.$
\end{cor}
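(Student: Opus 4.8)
The plan is to split each assertion into a lower bound and an upper bound. The two equalities $d_3^E(21,9)=8$ and $d_3^E(22,10)=8$ require both directions, whereas $d_3^E(23,11)\in\{7,8\}$ and $d_3^E(24,12)\in\{7,8\}$ require only the lower bound $\ge 7$ together with the upper bound $\le 8$. I would treat all four upper bounds first, since they chain cleanly, and then supply explicit codes for the lower bounds.

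For the upper bounds I would apply Corollary \ref{3-cor-leq} iteratively, in increasing order of $n$. The base data are the complete tables for $n\le 20$ in \cite{ter-11-19}, which give $d_3^E(20,8)=8$ (an explicit $[20,8,8]$ LCD code is listed above) and $d_3^E(19,7)\le 8$. Then Corollary \ref{3-cor-leq} yields $d_3^E(21,9)\le \max\{d_3^E(20,8),d_3^E(19,7)\}=8$, next $d_3^E(22,10)\le \max\{d_3^E(21,9),d_3^E(20,8)\}=8$, then $d_3^E(23,11)\le \max\{d_3^E(22,10),d_3^E(21,9)\}=8$, and finally $d_3^E(24,12)\le \max\{d_3^E(23,11),d_3^E(22,10)\}=8$. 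Each step only invokes values already shown to be at most $8$, so the chain closes.

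For the lower bounds I would exhibit explicit LCD codes. The $[22,10,8]$ code of Table 6 gives $d_3^E(22,10)\ge 8$, which matches the upper bound and establishes $d_3^E(22,10)=8$. To get a $[21,9,8]$ code I would apply Theorem \ref{Methods} to the $[20,8,8]$ LCD code, choosing $\mathbf x\in C^{\perp}$ with $wt(\mathbf x)\not\equiv 2\pmod 3$ such that both nonzero cosets $\mathbf x+C$ and $2\mathbf x+C$ have minimum weight at least $7$; then every new codeword with nonzero first coordinate has weight at least $8$, so $d_3^E(21,9)\ge 8$, giving equality. The codes attaining $d_3^E(23,11)\ge 7$ and $d_3^E(24,12)\ge 7$ would be produced analogously, e.g.\ by adjoining zero columns to the $[22,10,8]$ code (using $d_3^E(n+1,k)\ge d_3^E(n,k)$) and then applying Theorem \ref{Methods-2} once, respectively twice, with a suitable $\mathbf y\in C^{\perp}$ of weight $\not\equiv 0\pmod 3$, verifying that the minimum weight stays at least $7$.

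The main obstacle is that neither Theorem \ref{Methods} nor Theorem \ref{Methods-2} preserves the minimum distance automatically: adjoining a coordinate or a row can drop the weight below the target. Hence the crux is to find vectors $\mathbf x,\mathbf y$ that simultaneously meet the weight-congruence conditions of those theorems and keep all relevant coset weights high enough; I would settle this by a direct search and verification in MAGMA, as throughout the paper. Securing the sharp base value $d_3^E(19,7)\le 8$ from \cite{ter-11-19}, rather than the weaker $d_3^E(19,7)\le d_3^E(19,6)=9$ that Lemma \ref{lemma-leq-ternary} alone supplies, is likewise essential for the $(21,9)$ upper bound, and therefore for the whole chain.
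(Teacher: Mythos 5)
Your proposal is correct and follows essentially the same route as the paper: the upper bounds are obtained by chaining Corollary \ref{3-cor-leq} from the known values $d_3^E(19,7)=d_3^E(20,8)=8$ of \cite{ter-11-19}, and the lower bounds come from explicit codes, namely the $[22,10,8]$ code of Table 6 and codes produced by Theorems \ref{Methods} and \ref{Methods-2} with MAGMA-searched vectors, exactly as recorded in Table 7. The only cosmetic difference is the order in which you interleave zero-column padding with the row/column extensions to reach the $[21,9,8]$, $[23,11,7]$ and $[24,12,7]$ codes, which does not change the method.
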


\begin{proof}
By Tables 6 and 7, there are ternary LCD $[21,9,8]$, $[22,10,8]$, $[23,11,7]$ and $[24,12,7]$ codes. From \cite[Table 7]{ter-11-19}, we know that $d_3^E(19,7)=d_3^E(20,8)=8$.
It follows from Corollary \ref{3-cor-leq} that $d_3^E(21,9)\leq \max\{d_3^E(20,8),d_3^E(19,7)\}=8$.
Similarly, we have $d_3^E(22,10)\leq 8$, $d_3^E(23,11)\leq 8$ and $d_3^E(24,12)\leq 8$.
\end{proof}

\begin{remark}
Combining the Database \cite{codetables} and the above results, we give lower and upper bounds on minimal distance of ternary LCD codes with length $20\leq n\leq 25$, where the parameters for ternary LCD codes of the length 20 can be found in \cite{ter-11-19}. All results are listed in Tables 8-9.
\end{remark}

{\small
\begin{center}
 \begin{threeparttable}
\begin{tabular}{cccccccccccc}
\multicolumn{12}{c}{{\rm Table 8: Bound on the minimum distance of ternary LCD codes}}\\
\hline
    \makebox[0.05\textwidth][c]{$n\backslash k$}& \makebox[0.06\textwidth][c]{4} & \makebox[0.06\textwidth][c]{5}& \makebox[0.06\textwidth][c]{6}& \makebox[0.06\textwidth][c]{7}& \makebox[0.06\textwidth][c]{8}& \makebox[0.06\textwidth][c]{9}& \makebox[0.06\textwidth][c]{10}& \makebox[0.05\textwidth][c]{11}& \makebox[0.05\textwidth][c]{12}& \makebox[0.05\textwidth][c]{13}& \makebox[0.05\textwidth][c]{14}\\
    \hline\hline
20   & 12 & 11& 10& {\bf 9}& 8& 7-8& 7 &6 & {\bf6}& 5  & 4     \\

21   & 12 & 11-12& 10-11& 9-10& 8-9& 8& 7-8 &6-7 & 6& 5-6  & 4-5 \\

22   & 12-13 & 11-12& 11-12& 10-11& 9-10& 8-9& 8 & 7-8 & 6-7& 6  & 5-6   \\

23   & 13-14 &11-13& 11-12&  11-12& 10-11& 9-10& 8-9 &7-8 & 7-8& 6-7  & 6     \\

24   & 15 & 13-14& 11-13&  11-12& 10-11& 9-11& 8-10 &7-9 & 7-8& 7-8  & 6-7    \\

25   & 15-16 &  13-15& 13-14&  11-13&  10-12& 10-11& 9-11 &8-10 & 7-9& 7-8  & 6-8    \\
\hline
\end{tabular}
\begin{tablenotes}
\footnotesize
\item The parameters in bold denote the corresponding code has new parameters according to \cite{ter-11-19}.
\end{tablenotes}
\end{threeparttable}
\end{center}

}

{\small
\begin{center}
\begin{tabular}{cccccccccccc}
\multicolumn{12}{c}{{\rm Table 9: Bound on the minimum distance of ternary LCD codes}}\\
\hline
    \makebox[0.05\textwidth][c]{$n\backslash k$}& \makebox[0.06\textwidth][c]{15} & \makebox[0.06\textwidth][c]{16}& \makebox[0.06\textwidth][c]{17}& \makebox[0.06\textwidth][c]{18}& \makebox[0.06\textwidth][c]{19}& \makebox[0.06\textwidth][c]{20}& \makebox[0.06\textwidth][c]{21}& \makebox[0.05\textwidth][c]{22}& \makebox[0.05\textwidth][c]{23}& \makebox[0.05\textwidth][c]{24}& \makebox[0.05\textwidth][c]{25}\\
    \hline\hline
20   & 3-4& 3& 2& 2 & 2& 1 &  &  &  &   &       \\

21   & 4& 3& 3& 2 & 2& 1 & 1 &  &  &   &      \\

22   & 4-5& 4& 3& 3 & 2& 2& 2 & 1 &  &   &      \\

23   & 5-6& 4-5& 4& 3 & 3& 2& 2& 2 & 1 &   &        \\

24   & 6& 5-6& 4-5& 4 & 3& 3& 2& 2& 1 &  1 &    \\

25  & 6-7& 6 & 5-6& 4-5 & 4& 3& 3& 2& 2& 2  & 1 \\
\hline
\end{tabular}
\end{center}
}

%\subsection{Lengths greater than 25}
%In this subsection, we construct some new ternary LCD codes of lengths greater than 25 comparing with \cite{234-lcD}.

\begin{prop}
There are ternary LCD $[37,22,8]$ and $[40,29,6]$ codes.
\end{prop}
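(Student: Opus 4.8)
The plan is to build both codes by the same two-stage recipe used throughout Section 5: start from a best-known ternary linear code with a small hull, pass to an LCD code of smaller length and dimension by shortening on a hull information set (Theorem \ref{LCD-(n-l,k-l)}), and then climb back up one coordinate and one dimension at a time using the extension of Theorem \ref{Methods}. All the intermediate codes and connecting vectors are already recorded in Tables 6 and 7, so the argument amounts to chaining these constructions together and checking the hypotheses at each link.

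For the $[37,22,8]$ code I would first call the MAGMA function BKLC to obtain the best-known ternary $[44,29,8]$ code $C^{9}_{[44,29,8]_3}$, which has a $9$-dimensional hull. Applying Theorem \ref{LCD-(n-l,k-l)} with the hull information set $T=\{2,3,4,5,6,7,8,9,10\}$ produces the shortened code $(C^{9}_{[44,29,8]_3})_T$, an LCD $[35,20,\geq 8]$ code whose minimum weight one verifies to be exactly $8$, giving $C_{[35,20,8]_3}$. I would then apply Theorem \ref{Methods}(2) twice: choosing ${\bf x}\in C_{[35,20,8]_3}^{\perp}$ with $wt({\bf x})\not\equiv 2\pmod 3$ (the vector listed in Table 7) yields a ternary LCD $[36,21,8]$ code, and repeating with the tabulated vector in $C_{[36,21,8]_3}^{\perp}$ yields the desired $[37,22,8]$ code. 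The $[40,29,6]$ construction is shorter: starting from the best-known $[40,29,6]$ code with $1$-dimensional hull, Theorem \ref{LCD-(n-l,k-l)} with $T=\{2\}$ gives an LCD $[39,28,6]$ code $C_{[39,28,6]_3}$, and a single application of Theorem \ref{Methods}(2) with the tabulated ${\bf x}\in C_{[39,28,6]_3}^{\perp}$ lifts it to a ternary LCD $[40,29,6]$ code.

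At each extension step two hypotheses must hold: the chosen vector lies in the relevant dual code, and its weight is not congruent to $2$ modulo $3$. The first is automatic from how the vectors are selected, and the second is a direct count. The genuine obstacle, and the only part that truly requires computation, is controlling the minimum distance: Theorem \ref{LCD-(n-l,k-l)} guarantees only $d\geq 8$ (resp.\ $\geq 6$) after shortening, and the extension of Theorem \ref{Methods} could in principle lower the distance of the lengthened code below the target. I would therefore confirm with MAGMA that each intermediate code attains exactly the claimed minimum weight, so that the terminal $[37,22,8]$ and $[40,29,6]$ codes have the stated distances. The underlying point is that the specific vectors recorded in Table 7 are chosen precisely so that the minimum weight is preserved along the entire chain, which is what makes the construction succeed.
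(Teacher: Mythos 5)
Your proposal is correct and follows essentially the same route as the paper: the paper's proof takes the LCD $[35,20,8]$ and $[39,28,6]$ codes of Table 6 (obtained exactly as you describe, by shortening the best-known $[44,29,8]$ and $[40,29,6]$ codes on hull information sets via Theorem \ref{LCD-(n-l,k-l)}) and applies the extension of Theorem \ref{Methods} with the tabulated vectors, twice for $[37,22,8]$ and once for $[40,29,6]$. Your additional remark that the minimum distances must be confirmed computationally at each step is exactly what the paper delegates to MAGMA.
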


\begin{proof}
We start from the ternary LCD code $C_{[35,20,8]_3}$ (see Table 6). By applying Theorem \ref{Methods}, we can construct a ternary LCD $[36,21,8]$ code $C_{[36,21,8]_3}$, where ${\bf x}=(122021100000000\\02101221001212020220)$.
We start from the ternary LCD code $C_{[36,21,8]_3}$. By applying Theorem \ref{Methods}, we can construct a ternary LCD $[37,22,8]$ code $C_{[37,22,8]_3}$ with the generator matrix
{\small$$
G_{[37,22,8]_3}=\begin{pmatrix}
\begin{array}{c|c|c}
1&1&00200212100000112200221222000012121\\
\hline
0&1&12202110000000002101221001212020220\\
\hline
0&0&1 0 0 0 0 0 0 0 0 0 0 0 0 0 0 0 0 0 0 2 0 2 2 2 2 1 0 1 0 1 0 1 1 1 0\\
0&0&0 1 0 0 0 0 0 0 0 0 0 0 0 0 0 0 0 0 0 1 0 0 0 1 0 2 1 2 1 1 1 0 2 2 2\\
0&0&0 0 1 0 0 0 0 0 0 0 0 0 0 0 0 0 0 0 0 1 0 1 2 0 1 1 2 1 2 0 1 2 2 2 0\\
0&0&0 0 0 1 0 0 0 0 0 0 0 0 0 0 0 0 0 0 0 1 0 2 0 1 0 1 2 1 1 0 0 2 2 2 0\\
0&0&0 0 0 0 1 0 0 0 0 0 0 0 0 0 0 0 0 0 0 1 0 1 1 2 0 2 2 1 2 1 1 0 2 2 0\\
0&0&0 0 0 0 0 1 0 0 0 0 0 0 0 0 0 0 0 0 0 0 0 1 1 0 1 1 0 1 1 2 2 1 2 1 1\\
0&0&0 0 0 0 0 0 1 0 0 0 0 0 0 0 0 0 0 0 0 1 0 0 0 2 2 2 0 2 1 1 2 0 1 2 1\\
0&0&0 0 0 0 0 0 0 1 0 0 0 0 0 0 0 0 0 0 0 2 0 1 0 1 2 2 1 1 0 0 0 1 2 1 1\\
0&0&0 0 0 0 0 0 0 0 1 0 0 0 0 0 0 0 0 0 0 1 0 0 1 0 0 2 2 1 1 1 2 0 1 2 2\\
0&0&0 0 0 0 0 0 0 0 0 1 0 0 0 0 0 0 0 0 0 1 0 0 0 0 0 0 2 2 0 1 2 0 2 0 2\\
0&0&0 0 0 0 0 0 0 0 0 0 1 0 0 0 0 0 0 0 0 1 0 2 2 1 1 0 1 2 1 2 2 1 0 2 0\\
0&0&0 0 0 0 0 0 0 0 0 0 0 1 0 0 0 0 0 0 0 2 0 0 2 2 1 0 1 0 1 1 1 0 2 2 0\\
0&0&0 0 0 0 0 0 0 0 0 0 0 0 1 0 0 0 0 0 0 0 0 1 2 2 1 0 0 2 0 0 0 2 2 1 0\\
0&0&0 0 0 0 0 0 0 0 0 0 0 0 0 1 0 0 0 0 0 1 0 1 0 1 0 2 1 0 0 0 1 0 1 2 1\\
0&0&0 0 0 0 0 0 0 0 0 0 0 0 0 0 1 0 0 0 0 2 0 1 1 1 1 0 1 2 1 2 2 0 2 1 1\\
0&0&0 0 0 0 0 0 0 0 0 0 0 0 0 0 0 1 0 0 0 2 0 0 1 2 1 1 0 1 0 1 1 1 1 0 1\\
0&0&0 0 0 0 0 0 0 0 0 0 0 0 0 0 0 0 1 0 0 1 0 1 1 2 0 1 0 0 1 2 0 0 0 0 1\\
0&0&0 0 0 0 0 0 0 0 0 0 0 0 0 0 0 0 0 1 0 2 0 2 0 0 2 1 1 1 1 1 2 0 0 2 0\\
0&0&0 0 0 0 0 0 0 0 0 0 0 0 0 0 0 0 0 0 1 0 0 0 1 2 1 2 1 2 1 2 1 0 2 0 1\\
0&0&0 0 0 0 0 0 0 0 0 0 0 0 0 0 0 0 0 0 0 0 1 0 1 2 0 2 1 1 0 1 2 2 0 0 1
\end{array}
\end{pmatrix}.
$$}
We start from the ternary LCD code $C_{[39,28,6]_3}$ (see Table 6). By applying Theorem \ref{Methods}, we can construct a ternary LCD $[40,29,6]$ code $C_{[40,29,6]_3}$, where ${\bf x}=(021020210000200202112\\021120012101111112)$.
\end{proof}

\begin{remark}
We can construct some ternary LCD codes with better parameters compared with \cite{234-lcD}.
For example, the ternary LCD code of the length 37 with the dimension 22 has the minimum distance 8, while the ternary LCD code of the length 37 with the dimension 22 in \cite[Corollary 6.3]{234-lcD} has the minimum distance 7. That is to say, the ternary LCD code $C_{[37,22,8]_3}$ we obtained is also considered new.
The code $C_{[40,29,6]}$ also has better parameters compared with \cite[Corollary 6.3]{234-lcD}.
\end{remark}

\section{New quaternary Hermitian LCD codes}
Let $\F_4=\{0,1,\omega,\omega^2\}$.
Let $d^H_4 (n, k)$ denote the largest minimum distance among all quaternary Hermitian LCD $[n, k]$ codes.
%The following lemma was proved in \cite[Theorem 1]{34-LCD}.
%\begin{lem}\label{lemma-leq-quaternary}
%Suppose that $2\leq k\leq n$. Then
%$d^H_4(n,k)\leq d^H_4(n,k-1).$
%\end{lem}

\begin{prop}\label{H-22,23,24}
There are quaternary Hermitian LCD $[22,12,7]$, $[23,13,7]$, $[24,14,7]$ and $[25,15,7]$ codes.
\end{prop}

\begin{proof}
By the MAGMA function BKLC, one can construct a quaternary $[25,15,7]$ code $C^{4}_{[25,15,7]_4}$ with 4-dimensional Hermitian hull. According to Theorem \ref{LCD-(n-l,k-l)},
the shortened code $(C^{4}_{[25,15,7]_4})_{T}$ on $T=\{1,2,3,4\}$ is a quaternary Hermitian LCD $[21,11,7]$ code $C_{[21,11,7]_4}$, which has the following generator matrix
$$G_{[21,11,7]_4}=\left(
\setlength{\arraycolsep}{1.5pt}
                    \begin{array}{cccccccccccccccccccccccccccccc}
1&0&0&0&0&0&0&0&0&0&0&\omega&\omega^2&0&\omega^2&\omega^2&\omega&1&\omega&1&\omega^2\\
0&1&0&0&0&0&0&0&0&0&0&1&0&1&1&1&\omega^2&0&\omega&\omega&\omega\\
0&0&1&0&0&0&0&0&0&0&0&\omega^2&1&\omega^2&\omega^2&0&0&\omega^2&1&1&\omega^2\\
0&0&0&1&0&0&0&0&0&0&0&\omega^2&1&\omega&\omega^2&\omega^2&\omega&0&\omega&\omega^2&0\\
0&0&0&0&1&0&0&0&0&0&0&\omega^2&\omega^2&\omega&0&\omega&1&0&\omega&\omega^2&\omega\\
0&0&0&0&0&1&0&0&0&0&0&0&\omega^2&1&0&\omega&\omega&\omega^2&0&\omega&1\\
0&0&0&0&0&0&1&0&0&0&0&\omega&\omega^2&1&1&0&\omega&0&0&0&\omega\\
0&0&0&0&0&0&0&1&0&0&0&\omega&\omega^2&\omega&0&\omega^2&\omega^2&\omega^2&\omega^2&\omega^2&\omega^2\\
0&0&0&0&0&0&0&0&1&0&0&0&\omega&0&1&0&\omega^2&\omega&\omega&0&\omega^2\\
0&0&0&0&0&0&0&0&0&1&0&1&0&\omega^2&\omega^2&\omega&0&\omega&1&\omega^2&\omega\\
0&0&0&0&0&0&0&0&0&0&1&1&1&\omega&\omega&1&0&0&\omega&1&\omega
                    \end{array}
                  \right).
$$
By applying Theorem \ref{Methods}, one can construct a quaternary Hermitian LCD $[22,12,7]$ code, where ${\bf x}=(1 \omega 1 1 0 0 0 0 0 \omega 0 \omega \omega^2 \omega \omega 0 0 \omega^2 \omega^2 \omega^2 0)$.

By the MAGMA function BKLC, one can construct a quaternary $[26,16,7]$ code $C^{3}_{[26,16,7]_4}$ with 3-dimensional Hermitian hull. According to Theorem \ref{LCD-(n-l,k-l)},
the shortened code $(C^{3}_{[26,16,7]_4})_{T}$ on $T=\{1,2,3\}$ is a quaternary Hermitian LCD $[23,13,7]$ code $C_{[23,13,7]_4}$, which has the following generator matrix
$$G_{[23,13,7]_4}=\left(
\setlength{\arraycolsep}{1.5pt}
                    \begin{array}{cccccccccccccccccccccccccccccc}
1&0&0&0&0&0&0&0&0&0&0&0&0&0&\omega^2&\omega^2&0&\omega^2&\omega^2&0&1&0&1\\
0&1&0&0&0&0&0&0&0&0&0&0&0&\omega&0&1&\omega^2&0&\omega^2&0&0&1&1\\
0&0&1&0&0&0&0&0&0&0&0&0&0&\omega^2&1&0&1&\omega&1&\omega^2&1&\omega^2&\omega^2\\
0&0&0&1&0&0&0&0&0&0&0&0&0&\omega^2&\omega^2&0&0&\omega^2&0&\omega&\omega^2&0&\omega\\
0&0&0&0&1&0&0&0&0&0&0&0&0&\omega^2&\omega&0&\omega^2&1&\omega^2&1&1&\omega^2&\omega\\
0&0&0&0&0&1&0&0&0&0&0&0&0&1&\omega&\omega^2&\omega&\omega&\omega^2&1&1&1&\omega\\
0&0&0&0&0&0&1&0&0&0&0&0&0&\omega^2&1&0&\omega&\omega&\omega^2&0&\omega&1&0\\
0&0&0&0&0&0&0&1&0&0&0&0&0&\omega&1&\omega^2&0&0&\omega&\omega&1&0&1\\
0&0&0&0&0&0&0&0&1&0&0&0&0&\omega&\omega&\omega&\omega^2&1&1&1&\omega&1&1\\
0&0&0&0&0&0&0&0&0&1&0&0&0&\omega&0&1&0&\omega^2&\omega&\omega&0&\omega^2&0\\
0&0&0&0&0&0&0&0&0&0&1&0&0&\omega^2&\omega^2&\omega&\omega&1&\omega^2&0&0&\omega^2&\omega^2\\
0&0&0&0&0&0&0&0&0&0&0&1&0&\omega&\omega&\omega^2&1&1&1&\omega^2&\omega&\omega^2&\omega^2\\
0&0&0&0&0&0&0&0&0&0&0&0&1&\omega^2&0&1&0&1&1&1&\omega^2&1&\omega^2
                    \end{array}
                  \right).
$$

By applying Theorem \ref{Methods} to the code $C_{[23,13,7]_4}$, one can construct a quaternary Hermitian LCD $[24,14,7]$ code $C_{[24,14,7]_4}$, where ${\bf x}=(0 \omega \omega^2 0 \omega 0 0 1 \omega 0 1 \omega \omega \omega^2 \omega^2 \omega^2 0 0 \omega 1 1 \omega \omega)$.

By applying Theorem \ref{Methods} to the code $C_{[24,14,7]_4}$, one can construct a quaternary Hermitian LCD $[25,15,7]$ code $C_{[25,15,7]_4}$, where ${\bf x}=(1 1 0 0 1 \omega^2 1 \omega \omega^2 0 \omega^2 \omega^2 1 0 0 \omega^2 \omega^2 \omega \omega \omega^2 0 1 \omega^2 \omega)$.
\end{proof}

\begin{remark}
Compared with Table 3 in \cite{H-25}, the quaternary Hermitian LCD codes in Lemma \ref{H-22,23,24} have better parameters than their parameters.
For example,
the quaternary Hermitian LCD code of the length 24 with the dimension 14 has the minimum distance 7, while the quaternary Hermitian LCD code of the length 24 with the dimension 14 in \cite[Table 3]{H-25} has the minimum distance 6. That is to say, the quaternary Hermitian LCD code $C_{[24,14,7]_4}$ we obtained is also considered new.
\end{remark}

As an application, we can construct some EAQECCs with different parameters from quaternary Hermitian LCD codes.
We use $[[n,k,d;c]]_2$ to denote a binary entanglement-assisted quantum error correcting code (EAQECC) that encodes $k$ information qubits into $n$ channel qubits with the help of $c$ pre-shared entanglement pairs, and $d$ is called the minimum distance of the EAQECC. EAQECCs were introduced by Brun et al. in \cite{EAQECC}, which include the standard quantum stabilizer codes as a special case.
It has shown that if there is a quaternary Hermitian
LCD $[n, k, d]$ code, then there is a binary EAQECC with parameters $[[n, k, d; n-k]]_2$ (see \cite[Lemma 2.1']{H-20}). Hence, as a consequence of Proposition \ref{H-22,23,24} and \cite[Lemma 3.3]{H-lCD-k=3}, we have that

\begin{cor}\label{cor-EAQECC}
For $(n,k,d)\in \{(22,12,7),(23,13,7),(24,14,7),(25,15,7)\}$ and a nonnegative integer $s$, there is a binary EAQECC with parameters $[[n+\frac{4^k-1}{3}s,k,d+4^{k-1}s;n+\frac{4^k-1}{3}s-k]]_2$.
\end{cor}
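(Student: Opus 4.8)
The plan is to obtain the stated EAQECCs by chaining together three ingredients in sequence: the explicit base codes from Lemma~\ref{H-22,23,24}, the simplex-code lengthening of \cite[Lemma 3.3]{H-lCD-k=3}, and the Hermitian-LCD-to-EAQECC conversion of \cite{H-20}. To begin, I would fix one of the four triples $(n,k,d)\in\{(22,12,7),(23,13,7),(24,14,7),(25,15,7)\}$ and take as the starting point the quaternary Hermitian LCD $[n,k,d]$ code $C$ whose existence is guaranteed by Lemma~\ref{H-22,23,24}; the four cases then run in parallel.

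Next I would apply \cite[Lemma 3.3]{H-lCD-k=3} repeatedly. That lemma produces, from any quaternary Hermitian LCD $[N,k,D]$ code, a quaternary Hermitian LCD $[N+\frac{4^k-1}{3},k,D+4^{k-1}]$ code; concretely, one appends to the generator matrix the generator matrix of the $k$-dimensional quaternary simplex code over $\F_4$, which has length $\frac{4^k-1}{3}$ and in which every nonzero codeword has weight exactly $4^{k-1}$. Because this appended block contributes the constant weight $4^{k-1}$ to every nonzero codeword, the minimum distance increases by precisely $4^{k-1}$, the length increases by $\frac{4^k-1}{3}$, and the dimension is unchanged. Starting from $C$ and iterating this step $s$ times yields a quaternary Hermitian LCD $\left[n+\frac{4^k-1}{3}s,\ k,\ d+4^{k-1}s\right]$ code.

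Finally, I would feed this lengthened code into the construction recalled just before the statement: any quaternary Hermitian LCD $[N,K,D]$ code gives an EAQECC with parameters $[[N,K,D;N-K]]$ \cite{H-20}. Setting $N=n+\frac{4^k-1}{3}s$, $K=k$, and $D=d+4^{k-1}s$ gives $N-K=n+\frac{4^k-1}{3}s-k$, which is exactly the claimed parameter list, completing the argument for all four triples at once.

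The only nonroutine point — and hence the main obstacle — is confirming that the lengthening of \cite[Lemma 3.3]{H-lCD-k=3} genuinely preserves the Hermitian LCD property at every iteration and raises the minimum distance by exactly $4^{k-1}$ rather than merely bounding it from below; this rests on the Gram-matrix computation $G'\overline{G'}^{T}$ for the augmented generator matrix together with the constant-weight property of the simplex code. Once that lemma is granted, everything that remains is bookkeeping of the length, dimension, and distance parameters and the immediate $[[N,K,D;N-K]]$ conversion.
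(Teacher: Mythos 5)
Your proposal is correct and follows essentially the same route as the paper: the corollary there is stated precisely as a consequence of Lemma~\ref{H-22,23,24} (the four base codes), iterated application of \cite[Lemma 3.3]{H-lCD-k=3} (the simplex-code lengthening giving the $\frac{4^k-1}{3}s$ and $4^{k-1}s$ increments), and the conversion from a quaternary Hermitian LCD $[n,k,d]$ code to an EAQECC with parameters $[[n,k,d;n-k]]$ from \cite{H-20}. You have merely spelled out the parameter bookkeeping that the paper leaves implicit.
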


\begin{proof}
Combining with Proposition \ref{H-22,23,24} and \cite[Lemma 3.3]{H-lCD-k=3}, we know that there is a quaternary Hermitian LCD $[n+\frac{4^k-1}{3}s,k,d+4^{k-1}s]$ code for $(n,k,d)\in \{(22,12,7),(23,13,7),\\(24,14,7),(25,15,7)\}$ and a nonnegative integer $s$.
According to \cite[Lemma 2.1']{H-20}, there is a binary EAQECC with parameters $[[n+\frac{4^k-1}{3}s,k,d+4^{k-1}s;n+\frac{4^k-1}{3}s-k]]_2$.
\end{proof}

\begin{example}
According to Corollary \ref{cor-EAQECC}, there are binary EAQECCs with parameters $[[22,12,7;10]]_2$, $[[23,13,7;10]]_2$, $[[24,14,7;10]]_2$ and $[[25,15,7;10]]_2$. However, the best known binary EAQECCs in \cite{codetables} have parameters $[[22,12,6;6]]_2$, $[[23,13,6;7]]_2$, $[[24,14,6;7]]_2$ and $[[25,15,6;8]]_2$. The EAQECCs we construct have larger minimum distances.
\end{example}

\section{Conclusion}

In this paper, we have introduced some methods for constructing LCD codes over small finite fields by modifying some typical methods.
We have constructed many new binary LCD codes, ternary LCD codes and quaternary Hermitian LCD codes, which improve the known lower bounds on the largest minimum weights.
As a consequence, we used two counterexamples to disprove the conjecture proposed by Bouyuklieva.
Finally, as an application of quaternary Hermitian LCD codes, we found some binary EAQECCs with new parameters. We believe that our methods can produce more results for LCD codes.

\section*{Conflict of Interest}
The authors have no conflicts of interest to declare that are relevant to the content of this
article.

\section*{Data Deposition Information}
Our data can be obtained from the authors upon reasonable request.

\section*{Acknowledgement}
The authors would like to thank Dr. Hongwei Zhu for helpful discussions.
This research is supported by Natural Science Foundation of China (12071001).

\end{document}